\DeclareMathOperator{\argmin}{arg\,min}
\newtheorem{theorem}{\bf Theorem}
\newtheorem{lemma}{\bf Lemma}
\newtheorem{remark}{\bf Remark}
\newtheorem{assumption}{\bf Assumption}
\newcommand{\nnum}{\nonumber}
\newcommand{\RR}{\mathbb{R}}
\newcommand\oprocendsymbol{\hbox{$\blacksquare$}}
\newcommand\oprocend{\relax\ifmmode\else\unskip\hfill\fi\oprocendsymbol}
\title{Backup Plan Constrained Model Predictive Control with Guaranteed Stability}
\author{Ran Tao \footnote{Graudate student, Department of Mechanical Science and Engineering, rant3@illinois.edu. Corresponding author} }
\affil{University of Illinois Urbana-Champaign, Urbana, Illinois, 61801}
\author{Hunmin Kim \footnote{Assistant Professor, Department of Electrical and Computer Engineering, kim\_h@mercer.edu}}
\affil{Mercer University, Macon, Georgia, 31207}
\author{Hyung-Jin Yoon\footnote{Assistant Professor, Department of Mechanical Engineering, hyoon@tntech.edu}}
\affil{Tennessee Technological University, Cookeville, Tennessee, 38505}
\author{Wenbin Wan \footnote{Assistant Professor, Department of Mechanical Engineering, wwan@unm.edu}}
\affil{University of New Mexico, Albuquerque, New Mexico, 87131}
\author{Naira Hovakimyan \footnote{Professor, Department of Mechanical Science and Engineering, nhovakim@illinois.edu}, Lui Sha \footnote{Professor, Department of Computer Science, lrs@illinois.edu}}
\affil{University of Illinois Urbana-Champaign, Urbana, Illinois, 61801}
\author{Petros Voulgaris\footnote{Professor, Department of Mechanical Engineering, pvoulgaris@unr.edu}}
\affil{University of Nevada, Reno, Nevada, 89557}
\begin{document}

\maketitle
\begin{abstract}
This article proposes and evaluates a new safety concept called backup plan safety for path planning of autonomous vehicles under mission uncertainty using model predictive control (MPC). Backup plan safety is defined as the ability to complete an alternative mission when the primary mission is aborted. To include this new safety concept in control problems, we formulate a feasibility maximization problem aiming to maximize the feasibility of the primary and alternative missions. The feasibility maximization problem is based on multi-objective MPC, where each objective (cost function) is associated with a different mission and balanced by a weight vector. Furthermore, the feasibility maximization problem incorporates additional control input horizons toward the alternative missions on top of the control input horizon toward the primary mission, denoted as multi-horizon inputs, to evaluate the cost for each mission. We develop the backup plan constrained MPC algorithm, which designs the weight vector that ensures asymptotic stability of the closed-loop system, and generates the optimal control input by solving the feasibility maximization problem with computational efficiency. 
The performance of the proposed algorithm is validated through simulations of a UAV path planning problem.
\end{abstract}


\section{Introduction}\label{sec:intro}
The design and deployment of autonomous systems often rely on collision avoidance as the safety measure for path planning purposes. Nevertheless, the incident of Miracle on the Hudson (US Airways Flight 1549)~\citep{marra2009migratory} highlights the need for new safety definitions for automated systems that perform complicated tasks with mission uncertainties and safety priority. Due to both engines' failure from an unexpected bird strike, Captain Sullenberger had to abort the original flight plan. After checking the feasibility of safe landing points nearby, he successfully landed the plane on the Hudson river. However, a tragic plane crash and devastating human loss could have occurred if the plane could not complete the emergency landing due to the lack of feasible alternative landing locations. Motivated by this critical need, we introduce a new safety definition called {\em backup plan safety} for path planning of autonomous vehicles using MPC. Backup plan safety is defined as the ability to complete one of the alternative missions, e.g., landing at an alternative destination, in the case of the abortion of the primary mission, e.g., landing at the primary destination.

\begin{figure}[!h]
    \centering
    \includegraphics[width=.6\textwidth]{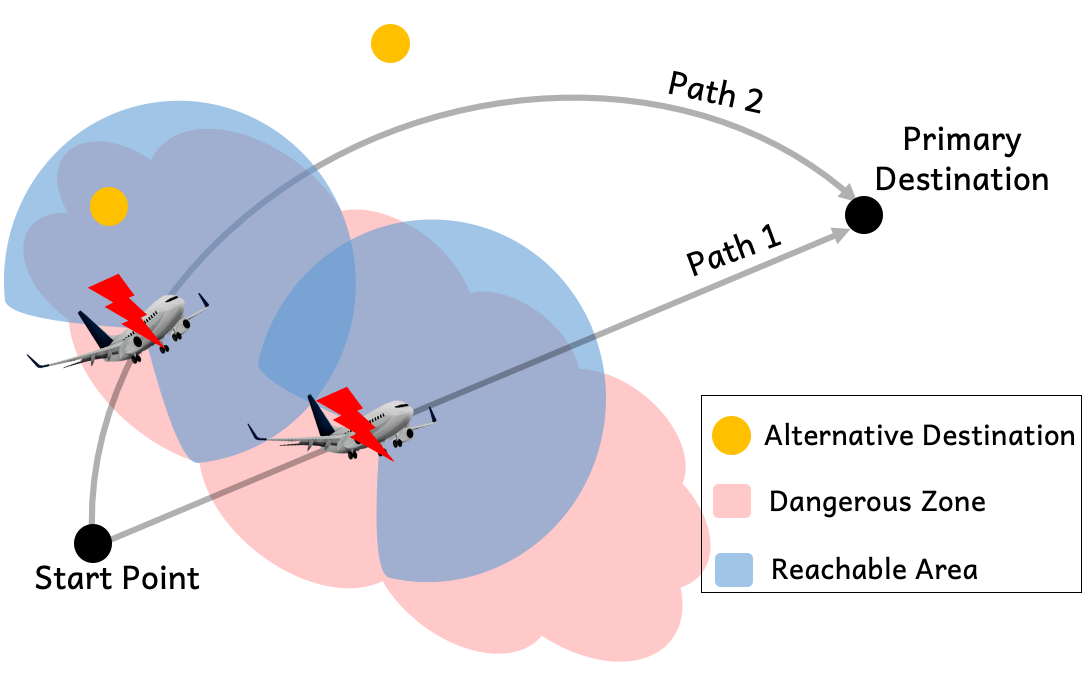}
    \caption{(Feasibility maximization scenario) The significance of the backup plan safety.}
    \label{fig:demo}
\end{figure}

In the path planning of airplanes, a safety concept similar to the backup plan safety, the 60-minute flight rule, was established in 1953 to limit the maximal distance to nearby airports along the planned trajectory for a twin-engine aircraft to guarantee a potential safe emergency landing. To fly outside of the 60-minute range, the aircraft needs to follow the extended-range twin-engine operational performance standards (ETOPS), which greatly enhance the safety of aircraft operations by certification of both the aircraft and the airline operator with more stringent FAA standards \citep{desantis2013engines}. Nevertheless, the 60-minute flight rule is limited to the path planning of aircraft. To generalize this rule for safety standards to a broad domain of applications, we propose the backup plan safety, which is beneficial to ensure safety during complex operations for all automated systems with long-horizon emergency response and mission uncertainties. The significance of the backup plan safety is illustrated in Fig.~\ref{fig:demo}, which demonstrates the situation of an aircraft traveling to its destination while passing through a hazardous zone, e.g., a storm. In Fig.\ref{fig:demo}, the red area represents the dangerous zone in which the plane may experience unexpected failures and not be able to arrive at the primary destination, and the blue area indicates the reachable area after the airplane is struck by lightning and has limited performance. Path 1 is a typical solution to the path planning problem as it has the shortest distance. However, along Path 1, there are no feasible means to safely land at an alternative airport when the plane confronts an emergency inside the dangerous zone and has limited performance. On the other hand, Path 2 increases the feasibility of a safe landing at other destinations. Thus, even when an emergency happens, following Path 2 allows the possibility for an airplane to finish a safe emergency landing at one of the two alternative destinations. The backup plan safety proposed in the current paper generalizes such a rule regarding the safety standard and application domains.

Model predictive control (MPC) is an optimal control methodology that uses an iterative approach to optimize control inputs for a finite time horizon while satisfying a set of constraints. MPC has been widely applied in the path and motion planning domain to ensure stability and safety \cite{paden2016survey,aggarwal2020path}.
Recent improvements in computing hardware and algorithmic innovations have made real-time MPC feasible, and there have been attempts to integrate machine learning with MPC techniques to achieve safe autonomy in a unified framework \citep{aswani2013provably,hewing2019cautious}. However, the safety goal in the literature of MPC applications is generally limited to collision avoidance. 

To address optimization problems with multiple and often conflicting performance criteria in the MPC framework, multi-objective MPC (MMPC) has been developed \citep{bemporad2009multiobjective}. In MMPC, different objectives are usually conflicting, and it is impossible to obtain a solution that optimizes all the objectives at the same time. Therefore, \citep{bemporad2009multiobjective} specifically focuses on finding the Pareto optimal solutions by balancing the cost functions of each objective with a weight vector. In addition to path planning, MMPC has demonstrated exceptional abilities in a variety of applications, such as power converter control \citep{hu2013multi,hu2018multi}, HVAC  (heating, ventilating, and air-conditioning) control \citep{ascione2017new}, and cruise control \citep{zhao2017real,li2010model}. For backup plan constrained control problems that involve multiple and conflicting missions, MMPC in \citep{bemporad2009multiobjective} provides a foundation for the feasibility maximization problem formulation. Nonetheless, incorporating backup plan safety requires modifications to the setup of MMPC. Specifically, as shown in Fig. \ref{fig:demo}, the cost for the plane to arrive at an alternative destination is well evaluated only when the trajectory (control inputs) toward that alternative destination is provided (e.g., mission feasibility). Thus, the single prediction horizon control included in both MPC and MMPC is insufficient for the backup plan safety, and additional inputs toward each alternative destination are required.
To ensure closed-loop stability in MPC, the usual approach is to consider the cost function of MPC as a candidate Lyapunov function and assume some properties of the cost function and the system dynamics to ensure a non-increasing cost function \cite{lazar2006stabilizing, bemporad2009multiobjective}. To achieve stability in MMPC, a fundamental assumption is having `aligned objectives' (i.e., all missions share the same destinations and all cost functions of each mission are minimized at the same state) \citep{bemporad2009multiobjective}. This assumption is conservative and unsuitable for backup plan safety where we consider missions with misaligned landing destinations, and each cost function should be minimized at the destination of the associated mission.

Stochastic Model Predictive Control (SMPC) leverages the probabilistic uncertainty model in an optimal control problem to balance the tradeoff between optimizing control objectives and satisfying chance constraints. Various approaches have been developed, including stochastic-tube \citep{cannon2010stochastic,cannon2012stochastic}, stochastic programming \citep{blackmore2010probabilistic}, and sampling-based approaches \citep{visintini2006monte,kantas2009sequential,williams2016aggressive,williams2017model, williams2018information},  to address SMPC. In particular, model predictive path integral control (MPPI) \citep{williams2016aggressive,williams2017model, williams2018information} is a sampling-based method  of SMPC that can handle nonlinear dynamical systems and uses Graphics Processing Units (GPUs) to enable efficient parallel computing. 

In this paper, we propose a new safety concept, backup plan safety, to enhance the safe operation of complex autonomous systems under mission uncertainty. This concept aims to maximize the feasibility of completing alternative missions, e.g., arriving at an alternative destination, during the operation of the primary mission, e.g., arriving at the primary destination, and we formulate a feasibility maximization problem to address the backup plan safety in control problems. Assigning each mission with an objective (cost function), we formulate the feasibility maximization problem based on multi-objective model predictive control (MPC). In particular, we use a weight vector to balance the multiple cost functions and incorporate additional (virtual) control input horizons toward the alternative missions on top of the control input horizon toward the primary mission, which we denote as multi-horizon inputs, to enable the evaluation of the costs for the alternative missions. With the inclusion of multi-cost functions and multi-horizon inputs, we are optimizing control inputs for every alternative mission. To this end, we develop the backup plan constrained MPC algorithm, which determines the weight vector that guarantees the asymptotic stability of the closed-loop system even when different missions have different destinations, which we denote as misaligned objectives, and generates the optimal control input by solving the feasibility maximization problem with multi-horizon inputs efficiently. To achieve closed-loop asymptotic stability, the designed weight vector ensures a non-increasing overall cost, i.e., a weighted sum of the cost functions of each mission, by making the cost decrease in the primary mission dominate the cost increase in the alternative missions. To reduce the computational complexity of solving the feasibility maximization problem with multi-horizon inputs, we propose the multi-horizon multi-objective model predictive path integral (3M) solver, which leverages a sampling-based scheme \citep{williams2016aggressive,williams2017model, williams2018information}.
Using simulations of an aerial vehicle model, we demonstrate this new safety concept and the performance of the proposed algorithm.

The remainder of the paper is organized as follows. Section~\ref{sec:MMPC} formulates the feasibility maximization problem based on multi-objective MPC with modifications to include multi-horizon inputs. Section \ref{sec:backup plan constrained mpc algorithm} presents the backup plan constrained MPC algorithm, and Section \ref{sec:proof} proves the asymptotic stability of the closed-loop system following the proposed algorithm. Section~\ref{sec:3M} introduces the 3M solver, which is leveraged by the backup plan constrained MPC algorithm to solve the feasibility maximization problem efficiently. Simulation results of UAV control problems are presented in Section~\ref{sec:sim}.

\section{Feasibility Maximization Problem Formulation}\label{sec:MMPC}
Consider the discrete-time dynamical system:
\begin{align}
    &\mathbf{x}_{k+1}= f(\mathbf{x}_k,\mathbf{u}_{k})\nnum\\
    & \text{s.t. } \mathbf{x}_k\in \mathbb{X},~~ \mathbf{u}_{k}\in \mathbb{U},
    \label{eq:sysmodel}
\end{align}
where $f:\RR^{n_x}\times\RR^{n_u}\rightarrow \RR^{n_x}$ is a continuous nonlinear function, $\mathbf{x}_k \in \RR^{n_x}$ is the system state, $\mathbf{u}_{k} \in \RR^{n_u}$ is the control input at time $k \geq 0$, and $\mathbb{X}$ and $\mathbb{U}$ are two compact sets of the constraints. It is assumed that one primary mission and $m$ alternative missions are provided. Without loss of generality, in this paper, we specifically focus on missions as arriving at destinations, and we say mission $i$ is completed at time $t$, if
\begin{align}
   t=&\argmin_k k\nnum\\
   &{\rm \ s.t. \ } d(\mathbf{x}_k,\mathbf{p}^i)=0 ,
   \label{eq:MissionCompletetion}
\end{align}
where $d:\RR^{n_x}\times\RR^{n_x}\rightarrow \RR_{\geq0}$ is the distance metric between the system state $\mathbf{x}_k$ and the destination of mission $i$, which is denoted by $\mathbf{p}^i \in \RR^{n_x}$. Without loss of generality, we further assume that $\mathbf{p}^0=0$. The control objective for the system~\eqref{eq:sysmodel} is to arrive at the primary destination, while the trajectory toward the primary destination is safe in terms of backup plan safety. 

As we are simultaneously maximizing the feasibility of the primary and alternative destinations, based on MMPC \cite{bemporad2009multiobjective}, we formulate the control problem of the system~\eqref{eq:sysmodel} as the following optimization problem, which we call feasibility maximization:
\begin{align}
    &\min_{\mathbf{U}} \mathbf{J}(\mathbf{x},\mathbf{U})\nnum\\
    &\text{s.t.} \nnum \\
    &\mathbf{x}_{k+1}= f(\mathbf{x}_k,\mathbf{u}_{k}), \mathbf{x}_0=\mathbf{x}\nnum\\
    &\mathbf{x}_k\in \mathbb{X}, k = 1,\dots,N \nnum \\
    &\mathbf{u}_{k}\in \mathbb{U}, k = 0,\dots,N-1,
    \label{eq:mmpc0}
\end{align}
where $\mathbf{x}$ is the current state of the system, $N$ is the prediction horizon, $\mathbf{U} \in \RR^{(N+\frac{N(N-1)m}{2})n_u}$ is the multi-horizon control input sequence, and $\mathbf{J}: \RR^{n_x} \times \RR^{(N+\frac{N(N-1)m}{2})n_u} \rightarrow \RR^{m+1}$ is an $m+1$ dimensional vector cost function with the first element representing the cost function of the primary destinations and the $j^{th}$ element associated with $(j-1)^{th}$ alternative destinations. 
Similar to standard MPC, at every time step, we execute the first control input of $\mathbf{U}$, shift the prediction horizon forward, and optimize the cost function again. 
Unlike the standard MPC or MMPC, the control input sequence $\mathbf{U}$ from the feasibility maximization problem~\eqref{eq:mmpc0} includes an input horizon toward the primary destination and additional (virtual) input horizons toward alternative destinations, which helps to evaluate the cost functions toward both the primary and alternative destinations.
Section \ref{sec:input} discusses the dimension and elements of the multi-horizon control input sequence $\mathbf{U}$, and Section \ref{sec:cost} discusses the design of the vector cost function $\mathbf{J}$.

Minimizing the first element in $\mathbf{J}$ maximizes the feasibility of the primary destination. Minimizing the $j^{th}$ element in $\mathbf{J}$ maximizes the feasibility of the $(j-1)^{th}$ alternative destination.
By solving~\eqref{eq:mmpc0}, we can obtain a control sequence $\mathbf{U}$ that maximizes the feasibility of all destinations simultaneously.
However, as in a multi-objective optimization problem, the feasibility maximization problem~\eqref{eq:mmpc0} includes conflicting cost functions in $\mathbf{J}(\mathbf{x},\mathbf{U})$, and, in this case, no solution minimizes all the cost functions simultaneously. Thus, Pareto optimality plays a key role in defining optimality for multi-objective optimization \citep{marler2004survey,bemporad2009multiobjective}.
Similar to MMPC in \citep{bemporad2009multiobjective}, we assign parameterized weights to the cost functions so that we limit our focus to a particular subset of Pareto optimal solutions. With the weight assignment, we reformulate the feasibility maximization problem ~\eqref{eq:mmpc0} as
\begin{align}
 &\min_{\mathbf{U}} {\boldsymbol{\alpha}}^\top\mathbf{J}(\mathbf{x},\mathbf{U})\nnum\\
 &\text{s.t.}\nnum\\
 &\mathbf{x}_{k+1}= f(\mathbf{x}_k,\mathbf{u}_{k}), \mathbf{x}_0=\mathbf{x}\nnum \\
 &\mathbf{x}_k\in \mathbb{X}, k = 1,\dots,N \nnum \\
 &\mathbf{u}_{k}\in \mathbb{U}, k = 0,\dots,N-1, 
 \label{eq:mmpc}
 \end{align}
where the weight vector $\boldsymbol{\alpha} = [\alpha^0,\cdots,\alpha^m]^\top \in \RR^{m+1}$
satisfies $\alpha^i \in [0,1] \subset \RR$ and $\sum_{i=0}^{m}\alpha^i=1$.
If $\alpha^i>0$ $\forall i$, the solution to~\eqref{eq:mmpc} is a Pareto optimal solution of~\eqref{eq:mmpc0}, while this may not hold when there exist some $i$ such that $\alpha^i=0$ \citep{boyd2004convex}.

\subsection{Multi-horizon Control Input Sequence and State Trajectory}\label{sec:input}
The multi-horizon control input sequence
\begin{align}
  \mathbf{U} = [(\mathbf{U}^0)^\top,(\mathbf{U}^1)^\top,\cdots,(\mathbf{U}^m)^\top]^\top
  \label{eq:BigU}
\end{align}
consists of future control inputs toward the primary destination $\mathbf{U}^0 = [\mathbf{u}_0^\top,\mathbf{u}_1^\top,\cdots,\mathbf{u}_{N-2}^\top,\mathbf{u}_{N-1}^\top]^\top$,
which is the input horizon used in the standard MPC, and additional (virtual) input horizons toward the alternative destinations $\mathbf{U}^i$ for $i=1,\cdots,m$, where we have $m$ alternative destinations in total.
Based on the definition of backup plan safety, two important properties should be followed when defining $\mathbf{U}^i$ toward the alternative destination $i$: It is unknown when the system will abort the primary destination and the control horizon toward the alternative destinations should be $N$.
According to the first property, $\mathbf{U}^i$ should consider the scenario of primary destination abortion at any time along the trajectory to the primary destination, and thus we construct $\mathbf{U}^i$ as:
\begin{align}
    \mathbf{U}^i = [(\mathbf{U}_0^i)^\top,(\mathbf{U}_1^i)^\top,\cdots,(\mathbf{U}_{N-3}^i)^\top,(\mathbf{U}_{N-2}^i)^\top]^\top,
    \label{eq:MediumU}
\end{align}
where $\mathbf{U}_\mathbf{p}^i$ consists of control inputs toward the $i^{th}$ alternative destination when the primary destination is aborted after $p+1$ time steps, i.e., $\mathbf{u}_p$ from $\mathbf{U}^0$ has been executed. Following this design, $\mathbf{u}_0,\cdots,\mathbf{u}_p$ from $\mathbf{U}^0$ are the first $p+1$ elements of $\mathbf{U}_\mathbf{p}^i$. Because $\mathbf{U}_\mathbf{p}^i$ should include $N$ elements to satisfy the second property, an additional $N-(p+1)$ inputs toward the alternative destination $i$ should be included in $\mathbf{U}_\mathbf{p}^i$.
Then, we have
\begin{align*}
    \mathbf{U}_\mathbf{p}^i = [\mathbf{u}_0^\top,\cdots,\mathbf{u}_\mathbf{p}^\top,(\mathbf{u}_{p,p+1}^i)^\top,\cdots,(\mathbf{u}_{p,N-1}^i)^\top]^\top,
\end{align*}
where $\mathbf{u}_{p,q}^i$ is the $(q+1)^{th}$ control input when the primary destination is aborted after executing $\mathbf{u}_p$ and we choose the alternative destination $i$ for an emergency landing.
\begin{figure}[t]
 \centering
  \includegraphics[width=0.7\linewidth]{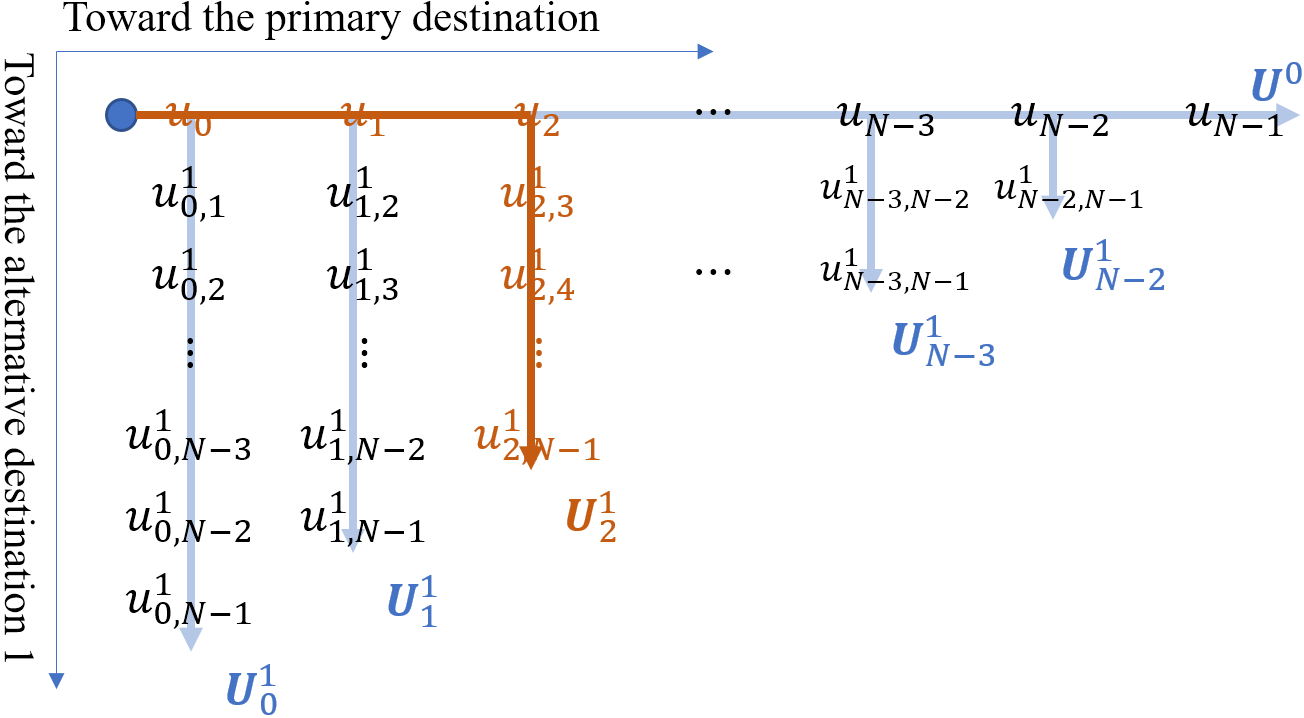}
  \caption{The elements of multi-horizon control input sequence $\mathbf{U}$.}
\label{fig:inputs}
\end{figure}
A visualization of the multi-horizon control input sequence $\mathbf{U}$ is presented in Fig.~\ref{fig:inputs} for the case of only one alternative destination. The figure depicts the relation between $\mathbf{U}$, $\mathbf{U}^0$, and $\mathbf{U}_\mathbf{p}^1$. To construct $\mathbf{U}^0$ or $\mathbf{U}_\mathbf{p}^1$, we start at the blue dot at the left upper corner, move right until the primary destination is aborted, and move downside afterward as alternative destination $1$ is chosen. Considering the multi-horizon inputs and the feasibility maximization problem \eqref{eq:mmpc}, we aim to optimize the control inputs toward each alternative destination at every time step.

The multi-horizon state trajectory $\mathbf{X}= [(\mathbf{X}^0)^\top,(\mathbf{X}^1)^\top,\cdots,(\mathbf{X}^m)^\top]^\top$ is constructed by simulating the multi-horizon control input sequence $\mathbf{U}$ on the system~\eqref{eq:sysmodel}, where $\mathbf{X}^0$ and $\mathbf{X}^i=[(\mathbf{X}_0^i)^\top,\cdots,(\mathbf{X}_{N-2}^i)^\top]^\top$ consist of the simulated states based on inputs $\mathbf{U}^0$ and $\mathbf{U}^i$ respectively for $i=1,\cdots,m$.
Similarly, $\mathbf{X}_\mathbf{p}^i = [\mathbf{x}_0^\top,\cdots,\mathbf{x}_\mathbf{p}^\top,\mathbf{x}_{p+1}^\top,\cdots,(\mathbf{x}_{p+1,N}^i)^\top]^\top$ is the vector of simulated states based on the input $\mathbf{U}_\mathbf{p}^i$, for $i=1,\cdots,m$ and $p=0,\cdots,N-2$.

Since $\mathbf{U}_\mathbf{p}^i$ and $\mathbf{U}^0$ share the first $p+1$ elements, $\mathbf{X}_\mathbf{p}^i$ and $\mathbf{X}^0$ share the first $p+2$ elements, which are $\mathbf{x}_0,\mathbf{x}_1,\cdots,\mathbf{x}_p,\mathbf{x}_{p+1}$. Therefore, the multi-horizon control input sequence $\mathbf{U}$ consists of the $N$ number of inputs toward the primary destination in $\mathbf{U}^0$, and $N-1,N-2,\cdots,1$ additional number of inputs toward the alternative destination $i$ in $\mathbf{U}_0^i,\mathbf{U}_1^i,\cdots,\mathbf{U}_{N-2}^i$. As a result, a total number of $N+\frac{N(N-1)}{2}m$ independent elements are in $\mathbf{U}$, and $\mathbf{X}$ includes a total number of $N+1+\frac{N(N-1)}{2}m$ independent elements. To address the induced computational complexity from the multi-horizon control input sequence, the proposed backup plan constrained MPC algorithm leverages the sampling-based 3M solver described in Section \ref{sec:3M} to solve the feasibility maximization problem efficiently. 

\subsection{Multi-objective Vector Cost Function}\label{sec:cost}
Given a state $\mathbf{x}$ and a multi-horizon control input sequence $\mathbf{U}$, $\mathbf{J}(\mathbf{x},\mathbf{U})$ is an $m+1$ dimensional vector function defined as:
\begin{align*}
    \mathbf{J}(\mathbf{x},\mathbf{U}) = [\mathbf{J}^0(\mathbf{x},\mathbf{U}^0),\mathbf{J}^1(\mathbf{x},\mathbf{U}^1),\cdots,\mathbf{J}^m(\mathbf{x},\mathbf{U}^m)]^\top,
\end{align*}
where $\mathbf{J}^0(\mathbf{x},\mathbf{U}^0)$ represents the cost function of the primary destination as $\mathbf{U}^0$ consists of the control inputs toward the primary destination, and $\mathbf{J}^i(\mathbf{x},\mathbf{U}^i)$ represents the cost function of the alternative destination $i$ as $\mathbf{U}^i$ consists of control inputs toward the alternative destination $i$. Since it is unknown when the system will abort the primary destination, we assume that diverting to the alternative destination is equally likely at each time step, and we design the elements of $\mathbf{J}(\mathbf{x},\mathbf{U})$ as the average cost over state-input trajectories toward the corresponding destination:
\begin{align*}
    &\mathbf{J}^0(\mathbf{x},\mathbf{U}^0) =J^0(\mathbf{x},\mathbf{U}^0),\nnum\\
    &\mathbf{J}^i(\mathbf{x},\mathbf{U}^i) =
    \frac{1}{N-1}\sum_{p=0}^{N-2}J^i(\mathbf{x},\mathbf{U}_\mathbf{p}^i)
\end{align*}
for $i=1,\cdots,m$. Let $\mathbf{X}= [(\mathbf{X}^0)^\top,(\mathbf{X}^1)^\top,\cdots,(\mathbf{X}^m)^\top]^\top$ be the corresponding multi-horizon state trajectory by simulating $\mathbf{U}$ on system \eqref{eq:sysmodel} at state $\mathbf{x}$. The function $J^i$ is a standard cost function for the destination $i$:
\begin{align}
    J^i(\mathbf{x},\mathbf{U}^i_p) {=} \sum_{k=1}^{N} L^i(\mathbf{X}^i_p[k],\mathbf{U}^i_p[k]) + F^i(\mathbf{X}^i_p[N+1])
    \label{eq:costSmallJ}
\end{align}
that consists of the running cost $L^i$ and the terminal cost $F^i$, where $\mathbf{X}^i_p[k] \in \RR^{n_x}$ and $\mathbf{U}^i_p[k] \in \RR^{n_u}$ are the $k^{th}$ state and input of $\mathbf{X}^i_p$ and $\mathbf{U}^i_p$, respectively.
With the multi-objective cost functions and multi-horizon inputs, we optimize control inputs for every alternative mission starting at every time step through the feasibility maximization problem \eqref{eq:mmpc}.

\subsection{Weight Vector}\label{sec:weight}
As the weight vector $\boldsymbol{\alpha}$ in the feasibility maximization problem \eqref{eq:mmpc} determines the cost function for optimization, it affects the performance and stability of the closed-loop system significantly. In this paper, we will design the weight vector as a function depending on the state $\mathbf{x}_k$, which we denote as $\boldsymbol{\alpha}_*(\mathbf{x}_k)$, and we denote the solution to the feasibility maximization \eqref{eq:mmpc} for $\mathbf{x}=\mathbf{x}_k$ and $\boldsymbol{\alpha}=\boldsymbol{\alpha}_*(\mathbf{x}_k)$ by the optimal control input sequence $\mathbf{U}_*(\mathbf{x}_k)$.

\section{Backup Plan Constrained MPC Algorithm}
\label{sec:backup plan constrained mpc algorithm}
The current section proposes the backup plan constrained MPC algorithm which at time step $k$ with state $\mathbf{x}_k$ determines $\boldsymbol{\alpha}_*(\mathbf{x}_k)$, then leverages the 3M solver to solve the feasibility maximization problem \eqref{eq:mmpc} for $\mathbf{x}=\mathbf{x}_k$ and $\boldsymbol{\alpha}=\boldsymbol{\alpha}_*(\mathbf{x}_k)$, and finally determines the optimal control input $\mathbf{u}^*_k$. 
The details of the 3M solver will be explained in Section \ref{sec:3M}, and the proposed backup plan constrained MPC algorithm is summarized in Algorithm \ref{al:backup}.
\begin{algorithm}[H]
\caption{Backup Plan Constrained MPC Algorithm}\label{al:backup}
\textbf{Input to the Backup Plan Constrained MPC algorithm:}\vspace{-2mm}\\
\vspace{-2mm}
$\mathbf{x}_k$: State of the system at the current time step; \\\vspace{-2mm}
$\boldsymbol{\alpha}_*(\mathbf{x}_{k-1})$: Weight vector applied at the previous time step;\\\vspace{-2mm}
$\mathbf{U}_*(\mathbf{x}_{k-1})$: Optimal control input sequence at the previous time step;\\\vspace{-2mm}
\textbf{Tuning parameters:}
Positive constants $\delta$, $\gamma^i$ ($i=0,1,\cdots,m$), and $\mu$, matrix $K$, control input $\hat{\mathbf{u}}$

\begin{algorithmic}[1]
\STATE Compute $\mathbf{U}_s(\mathbf{x}_k)$ in \eqref{eq:u_s} based on $\mathbf{U}_*(\mathbf{x}_{k-1})$ with parameters $K$ and $\hat{\mathbf{u}}$;
\vspace{-2mm}
\IF{$\mathbf{x}_k\in B_\delta$} \vspace{-2mm}
    \STATE Set $\boldsymbol{\alpha}_*(\mathbf{x}_k)=[1,0,0,\dots,0]^\top$;\vspace{-2mm}
    \STATE Compute $\mathbf{U}_*(\mathbf{x}_k)$ by solving \eqref{eq:mmpc} for $\mathbf{x}=\mathbf{x}_k$, $\boldsymbol{\alpha}=\boldsymbol{\alpha}_*(\mathbf{x}_k)$ with 3M solver, $\mathbf{U}_*(\mathbf{x}_k)=\text{3M}(\mathbf{x}_k,\boldsymbol{\alpha}_*(\mathbf{x}_k),\mathbf{U}_s(\mathbf{x}_{k}))$; \vspace{-2mm}
\ELSIF{$\mathbf{x}_k\notin B_\delta$}
  \vspace{-2mm}
      \STATE Compute the baseline weight vector $\boldsymbol{\alpha}_b(\mathbf{x}_k)$ in \eqref{eq:alpha_b} using parameters $\delta$, $\gamma^i$ ($i=0,1,\cdots,m$), $\mu$;
    \vspace{-2mm}
    \STATE Based on $\mathbf{U}_s(\mathbf{x}_k)$, $\boldsymbol{\alpha}_b(\mathbf{x}_k)$, and $\boldsymbol{\alpha}_*(\mathbf{x}_{k-1})$, compute the transitional weight vector $\boldsymbol{\alpha}_t(\mathbf{x}_k)$ in \eqref{eq:alpha_t};
    \vspace{-2mm}
    \STATE Compute $\mathbf{U}(\mathbf{x}_k,\boldsymbol{\alpha}_t(\mathbf{x}_k))$ by solving \eqref{eq:mmpc} for $\mathbf{x}=\mathbf{x}_k$, $\alpha=\boldsymbol{\alpha}_t(\mathbf{x}_k)$ with 3M solver,\\\vspace{-2mm} $\mathbf{U}(\mathbf{x}_k,\boldsymbol{\alpha}_t(\mathbf{x}_k))=\text{3M}(\mathbf{x}_k,\boldsymbol{\alpha}_t(\mathbf{x}_k),\mathbf{U}_s(\mathbf{x}_{k}))$;
    \vspace{-2mm}
    \STATE Compute $\mathbf{x}_{k,f}$ by simulating $\mathbf{U}^0(\mathbf{x}_k,\boldsymbol{\alpha}_t(\mathbf{x}_k))$ on system \eqref{eq:sysmodel} at $\mathbf{x}_k$;
    \vspace{-2mm}
    \IF{$\mathbf{x}_{k,f}\in B_\delta$} \vspace{-2mm}
        \STATE Set $\boldsymbol{\alpha}_*(\mathbf{x}_k)=[1,0,0,\dots,0]^\top$;
        \vspace{-2mm}
        \STATE Compute $\mathbf{U}_*(\mathbf{x}_k)=\text{3M}(\mathbf{x}_k,\boldsymbol{\alpha}_*(\mathbf{x}_k),\mathbf{U}_s(\mathbf{x}_{k}))$;
        \vspace{-2mm}
        \ELSIF{$\mathbf{x}_{k,f}\notin B_\delta$} \vspace{-2mm}
            \STATE Set $\boldsymbol{\alpha}_*(\mathbf{x}_k)=\boldsymbol{\alpha}_t(\mathbf{x}_k)$ ;
            \vspace{-2mm}
            \STATE Set $\mathbf{U}_*(\mathbf{x}_k)=\mathbf{U}(\mathbf{x}_k,\boldsymbol{\alpha}_t(\mathbf{x}_k))$, where $\mathbf{U}(\mathbf{x}_k,\boldsymbol{\alpha}_t(\mathbf{x}_k))$ is achieved in line 8;
            \vspace{-2mm}
     \ENDIF \vspace{-2mm}
\ENDIF \vspace{-2mm}
\IF{$\boldsymbol{\alpha}_*(\mathbf{x}_{k-1})=[1,0,0,\dots,0]^\top$} \vspace{-2mm}
    \STATE Set $\boldsymbol{\alpha}_*(\mathbf{x}_k)=[1,0,0,\dots,0]^\top$; \vspace{-2mm}
    \STATE Compute $\mathbf{U}_*(\mathbf{x}_k)=\text{3M}(\mathbf{x}_k,\boldsymbol{\alpha}_*(\mathbf{x}_k),\mathbf{U}_s(\mathbf{x}_{k}))$; \vspace{-2mm}
\ENDIF \vspace{-2mm}
\STATE Set $\mathbf{u}^*_k$ as the first element of $\mathbf{U}_*(\mathbf{x}_k)$ and apply $\mathbf{u}^*_k$ to the system.
\end{algorithmic}
\end{algorithm}

Before presenting the details, a summary of the proposed backup plan constrained MPC algorithm proceeds as follows. While the current state $\mathbf{x}_k$ and the final state toward the primary destination $\mathbf{x}_{k,f}$, which will be defined later, are outside an open ball $B_\delta$ centered at the origin with radius $\delta$ (line 5 and line 13), we specifically design $\boldsymbol{\alpha}_*(\mathbf{x}_k)$ (line 14) to ensure a non-increasing overall cost, i.e., the weighted sum of the cost functions of each destination. During this time, the backup plan constrained MPC algorithm considers arriving at alternative destinations during the operation toward the primary destination. Once either $\mathbf{x}_k$ or $\mathbf{x}_{k,f}$ enters $B_\delta$ (line 2 and line 10), we set $\boldsymbol{\alpha}_*(\mathbf{x}_k)=[1,0,0,\dots,0]^\top$ (line 3 and line 11) for the current and all future time steps (line 18-19), and the backup plan constrained MPC algorithm considers arriving at the primary destination only thereafter. 

The details of the algorithm proceed as follows.
At the current time step $k$, we know the values of $\boldsymbol{\alpha}_*(\mathbf{x}_{k-1})$ and $\mathbf{U}_*(\mathbf{x}_{k-1})$, which are the designed weight vector and the optimal control input sequence at the previous time step. With $\mathbf{U}_*(\mathbf{x}_{k-1})$, we first calculate a control input sequence
\begin{align}
\label{eq:u_s}
    \mathbf{U}_s(\mathbf{x}_{k}) = [\mathbf{U}^0_s(\mathbf{x}_{k})^\top, \mathbf{U}^1_s(\mathbf{x}_{k})^\top,\dots.,\mathbf{U}^m_s(\mathbf{x}_{k})^\top]^\top,
\end{align}
which is constructed from $\mathbf{U}_*(\mathbf{x}_{k-1})=[\mathbf{U}^0_*(\mathbf{x}_{k-1})^\top, \mathbf{U}^1_*(\mathbf{x}_{k-1})^\top,\dots.,\mathbf{U}^m_*(\mathbf{x}_{k-1})^\top]^\top$ with parameters $K$ and $\hat{\mathbf{u}}$ (line 1). To be more specific, for $i=0,1,\dots,m$, $\mathbf{U}^i_s(\mathbf{x}_{k})$ is formed by removing the first element of $\mathbf{U}^i_*(\mathbf{x}_{k-1})$ and adding an extra element at the end of $\mathbf{U}^i_*(\mathbf{x}_{k-1})$. For $\mathbf{U}^0_s(\mathbf{x}_{k})$, the extra element is $K\mathbf{x}_{k-1,f}$, where $K$ is the tuning parameter and $\mathbf{x}_{k-1,f}$ is the resulting state toward the primary destination from forward integrating with the control inputs specified in $\mathbf{U}_*^0(\mathbf{x}_{k-1})$ on system \eqref{eq:sysmodel} at $\mathbf{x}_{k-1}$. For $\mathbf{U}^i_s(\mathbf{x}_{k})$ s.t. $i \neq0$, the extra element we add is the tuning parameter $\hat{\mathbf{u}}$. For example, if $\mathbf{U}_*^0(\mathbf{x}_{k-1}) = [\mathbf{u}_0^\top,\mathbf{u}_1^\top,\dots,\mathbf{u}_{N-1}^\top]^\top$, then $\mathbf{U}_s^0(\mathbf{x}_{k}) = [\mathbf{u}_1^\top,\dots,\mathbf{u}_{N-1}^\top,(K\mathbf{x}_{k-1,f})^\top]^\top$. The calculated $\mathbf{U}_s(\mathbf{x}_k)$ will be used in the 3M solver to solve the feasibility maximization problem \eqref{eq:mmpc}. 

Next, we check if $\mathbf{x}_k$ is within $B_\delta$, which is an open ball centered at the origin with radius $\delta$. If $\mathbf{x}_k \in B_\delta$ (line 2), the current state is close to the origin, and we set $\boldsymbol{\alpha}_*(\mathbf{x}_k)=[1,0,0,\dots,0]^\top$ (line 3). We then achieve the optimal control input sequence $\mathbf{U}_*(\mathbf{x}_k)$ by solving the feasibility maximization problem \eqref{eq:mmpc} for $\mathbf{x}=\mathbf{x}_k$, $\boldsymbol{\alpha}=\boldsymbol{\alpha}_*(\mathbf{x}_k)$ with the 3M solver, $\mathbf{U}_*(\mathbf{x}_k)=\text{3M}(\mathbf{x}_k,\boldsymbol{\alpha}_*(\mathbf{x}_k),\mathbf{U}_s(\mathbf{x}_k))$ (line 4). The details of the 3M solver will be explained in Section \ref{sec:3M}.
On the other hand, if $\mathbf{x}_k\notin B_\delta$ (line 5), we first compute a baseline weight vector $\boldsymbol{\alpha}_b(\mathbf{x})=[\alpha_b^0(\mathbf{x}),\alpha_b^1(x),\dots,\alpha_b^m(x)]^\top$ using tuning parameters $\gamma^i$ ($i=0,1,\cdots,m$) and $\mu$ (line 6):
\begin{equation}\label{eq:alpha_b}
\boldsymbol{\alpha}_b(\mathbf{x}):
\left\{
  \begin{array}{@{}ll@{}}
    \alpha_b^i(\mathbf{x})=\gamma^i\left(\frac{\|\mathbf{x}\|}{\max(\mu,\|\mathbf{x}-\mathbf{p}^i\|)}\right) & \text{for $i=1,2,\dots,m$}\\
    \alpha_b^0(\mathbf{x})=1- \sum_{i=1}^{m}\alpha_b^i(\mathbf{x}).
  \end{array}\right.
\end{equation}
Based on the previously calculated $\mathbf{U}_s(\mathbf{x}_k)$ in line 1, $\boldsymbol{\alpha}_b(\mathbf{x}_k)$ in line 6, and $\boldsymbol{\alpha}_*(\mathbf{x}_{k-1})$, we calculate a transitional weight vector $\boldsymbol{\alpha}_t(\mathbf{x}_k)$ defined as (line 7):
\begin{equation}
    \label{eq:alpha_t}
    \boldsymbol{\alpha}_t(\mathbf{x}_k) =
    \left\{
      \begin{array}{@{}ll@{}}
        \boldsymbol{\alpha}_b(\mathbf{x}_k) &\text{if $\boldsymbol{\alpha}_b(\mathbf{x}_{k})^\top \mathbf{J}(\mathbf{x}_{k},\mathbf{U}_s(\mathbf{x}_{k}))\le \boldsymbol{\alpha}_*(\mathbf{x}_{k-1})^\top \mathbf{J}(\mathbf{x}_{k},\mathbf{U}_s(\mathbf{x}_k))$}, \\
        \boldsymbol{\alpha}_*(\mathbf{x}_{k-1}) & \text{otherwise}.
      \end{array}\right.
    \end{equation}
The design of $\boldsymbol{\alpha}_t(\mathbf{x}_k)$ in \eqref{eq:alpha_t} will be used in Section \ref{sec:proof} to prove the asymptotic stability of the closed-loop system.
Next, we compute $\mathbf{U}(\mathbf{x}_k,\boldsymbol{\alpha}_t(\mathbf{x}_k))= [\mathbf{U}^0(\mathbf{x}_k,\boldsymbol{\alpha}_t(\mathbf{x}_k))^\top, \mathbf{U}^1(\mathbf{x}_k,\boldsymbol{\alpha}_t(\mathbf{x}_k)),\dots,\mathbf{U}^m(\mathbf{x}_k,\boldsymbol{\alpha}_t(\mathbf{x}_k))]^\top$ by solving \eqref{eq:mmpc} for $\mathbf{x}=\mathbf{x}_k$, $\boldsymbol{\alpha}=\boldsymbol{\alpha}_t(\mathbf{x}_k)$ with the 3M solver, $\mathbf{U}(\mathbf{x}_k,\boldsymbol{\alpha}_t(\mathbf{x}_k))=\text{3M}(\mathbf{x}_k,\boldsymbol{\alpha}_t(\mathbf{x}_k),\mathbf{U}_s(\mathbf{x}_k))$ (line 8). 
When $\mathbf{x}_k\notin B_\delta$ (line 5-17), an additional check on $\mathbf{x}_{k,f}$, which is the resulting state by simulating $\mathbf{U}^0(\mathbf{x}_k,\boldsymbol{\alpha}_t(\mathbf{x}_k))$ from $\mathbf{U}(\mathbf{x}_k,\boldsymbol{\alpha}_t(\mathbf{x}_k))$ in line 8 on system \eqref{eq:sysmodel} at $\mathbf{x}_k$ (line 9), is required to determine $\boldsymbol{\alpha}_*(\mathbf{x}_k)$ to ensure a non-increasing overall cost. If $\mathbf{x}_{k,f}\in B_\delta$ (line 10), we set $\boldsymbol{\alpha}_*(\mathbf{x}_k) = [1,0,0,\dots,0]^\top$ (line 11) and compute the optimal control input sequence $\mathbf{U}_*(\mathbf{x}_k)$ by solving \eqref{eq:mmpc} for $\mathbf{x}=\mathbf{x}_k$, $\boldsymbol{\alpha}=\boldsymbol{\alpha}_*(\mathbf{x}_k)$ with the 3M solver,  $\mathbf{U}_*(\mathbf{x}_k)=\text{3M}(\mathbf{x}_k,\boldsymbol{\alpha}_*(\mathbf{x}_k),\mathbf{U}_s(\mathbf{x}_k))$ (line 12). 
Else if $\mathbf{x}_{k,f}\notin B_\delta$ (line 13), we set $\boldsymbol{\alpha}_*(\mathbf{x}_k)=\boldsymbol{\alpha}_t(\mathbf{x}_k)$ (line 14) and it follows that $\mathbf{U}_*(\mathbf{x}_k)=\mathbf{U}(\mathbf{x}_k,\boldsymbol{\alpha}_t(\mathbf{x}_k))$ with $\mathbf{U}(\mathbf{x}_k,\boldsymbol{\alpha}_t(\mathbf{x}_k))$ calculated in line 8 (line 15). 
Another key design of the weight vector is that once we have $\boldsymbol{\alpha}_*(\mathbf{x}_k)=[1,0,0,\dots,0]^\top$ for the current time step, we set $\boldsymbol{\alpha}_*(\mathbf{x}_k)$ as $[1,0,0,\dots,0]^\top$ for all future time steps regardless of whether $\mathbf{x}_k\notin B_\delta$ and $\mathbf{x}_{k,f}\notin B_\delta$ for some future time step (line 18-19). In other words, once line 2 or line 10 is satisfied for the first time following the backup plan constrained MPC algorithm, we set $\boldsymbol{\alpha}_*(x)=[1,0,0,\dots,0]^\top \; \forall \mathbf{x}\in\mathbb{X}$ thereafter.
In the end, we set the optimal control input $\mathbf{u}^*_k$ at the current time step as the first element of $\mathbf{U}_*(\mathbf{x}_k)$, and apply $\mathbf{u}^*_k$ to the system (line 22).

\section{Asymptotic Stability of the Closed-loop System}
\label{sec:proof}
In this section, we first list and explain the assumptions we have for the backup plan constrained control problems. Then, we prove that the existence of the tuning parameters of the backup plan constrained MPC algorithm satisfying two introduced properties is guaranteed. Next, we show that with the tuning parameters satisfying two introduced properties, a desired property related to the overall cost can be achieved. Finally, by leveraging the previous results, we prove that the system \eqref{eq:sysmodel} in the closed loop, following the backup plan constrained MPC algorithm proposed in Algorithm \ref{al:backup}, is asymptotically stable.

In order to achieve asymptotic stability for MMPC, \cite{bemporad2009multiobjective} includes a conservative assumption that all objectives are aligned, i.e., $\mathbf{p}^i=\mathbf{0} \; \forall i$ where $\mathbf{p}^i$ is the destination point of mission $i$, and $L^i(\mathbf{0},\mathbf{0})=F^i(\mathbf{0})=0\; \forall i$. For backup plan constrained control problems with misaligned objectives (different missions have different destinations), we include a general assumption below that the cost function of destination $i$ is minimized at $\mathbf{p}^i$ as follows.
\begin{assumption}
\label{assumption:1}
The primary destination is at $\mathbf{p}^0=\mathbf{0}$, which is the equilibrium point of the system. For $i=0,1,\dots,m$, the running cost $L^i(\mathbf{x},\mathbf{u})$ and terminal cost $F^i(\mathbf{x})$ are continuous with respect to x with $L^i(\mathbf{p}^i,\mathbf{0})=0$ and $F^i(\mathbf{p}^i)=0$, where $\mathbf{p}^i$ is the destination $i$. There exist class $\mathcal{K}$ functions, i.e., $\mathcal{K}$-functions, $\sigma_1 \text{ and } \sigma_2$, s.t. $\forall \mathbf{x} \in \mathbb{X}$ and $\forall \mathbf{u} \in \mathbb{U}$, $L^i(\mathbf{x},\mathbf{u})\ge\sigma_1(\lVert \mathbf{x}-\mathbf{p}^i \rVert)$, $F^i(\mathbf{x}) \ge\sigma_1(\lVert \mathbf{x}-\mathbf{p}^i \rVert)$, and $F^i(\mathbf{x}) \le\sigma_2(\lVert \mathbf{x}-\mathbf{p}^i \rVert)$. 
\end{assumption}

The second assumption below is related to the cost of the primary destination.
\begin{assumption}
\label{assumption 2}
For any positive parameter $\delta>0$ s.t. $B_\delta \subseteq \mathbb{X}$, where $B_\delta$ is the open ball centered at the origin with radius $\delta$, there exists a linear feedback $\mathbf{u}=K\mathbf{x}$, s.t.
\begin{equation}
\label{eq:assumption}
L^0(\mathbf{x},K\mathbf{x})+F^0(f(\mathbf{x},K\mathbf{x}))-F^0(\mathbf{x}) < 0, \; \forall \mathbf{x} \in \mathbb{X}\setminus \{B_\delta\}. 
\end{equation}
\end{assumption}
\noindent Based on the design of the cost functions in \eqref{eq:costSmallJ}, $L^0(\mathbf{x},K\mathbf{x})$ is the running cost of applying $K\mathbf{x}$ at $\mathbf{x}$, $F^0(f(\mathbf{x},K\mathbf{x}))$ is the terminal cost of the $f(\mathbf{x},K\mathbf{x})$, which is the state from applying $K\mathbf{x}$ on system \eqref{eq:sysmodel} at $\mathbf{x}$, and $F^0(\mathbf{x})$ is the terminal cost of the current state $\mathbf{x}$. It follows that $L^0(\mathbf{x},K\mathbf{x})+F^0(f(\mathbf{x},K\mathbf{x}))-F^0(\mathbf{x})$ can be viewed as the additional cost of the primary destination by applying $K\mathbf{x}$ at $\mathbf{x}$ compared with the cost of $\mathbf{x}$ itself. Therefore, condition \eqref{eq:assumption} implies that for any feasible state $\mathbf{x}$ that is outside $B_\delta$, we can achieve a cost reduction of the primary destination by applying linear feedback $K\mathbf{x}$ at $\mathbf{x}$. Assumption \ref{assumption 2} is also included in \cite{lazar2006stabilizing} to achieve a non-increasing cost function for the stability proof of MPC. 

Before presenting the lemma and theorems, we first define the tuning parameter $\hat{\mathbf{u}}$ from the backup plan constrained MPC algorithm, and a new term $P$ based on $\hat{\mathbf{u}}$.
Given a system function $f$, the running cost $L^i$ and the terminal cost $F^i$ for $i=0,\dots,m$, the control input $\hat{\mathbf{u}}$ used to form $\mathbf{U}_s(\mathbf{x}_k)$ in \eqref{eq:u_s} from the backup plan constrained MPC algorithm is defined as:
\begin{align}
\label{eq:uhat}
    \hat{\mathbf{u}} = \argmin_{\mathbf{u}\in\mathbb{U}}\max_{\{i,\mathbf{x}\}\in \mathbb{I}\times\mathbb{X}}L^i(\mathbf{x},\mathbf{u})+F^i(f(\mathbf{x},\mathbf{u}))-F^i(\mathbf{x}),
\end{align} 
where $\mathbb{I} = \{0,1,2,\dots,m\}$. Based on the control input $\hat{\mathbf{u}}$, constant $P$ is defined as:
\begin{align}
\label{eq:P}
P= \max_{\{i,\mathbf{x}\}\in\mathbb{I}\times\mathbb{X}}L^i(\mathbf{x},\hat{\mathbf{u}})+F^i(f(\mathbf{x},\hat{\mathbf{u}}))-F^i(\mathbf{x})=\min_{\mathbf{u} \in \mathbb{U}}\max_{\{i,\mathbf{x}\}\in \mathbb{I}\times\mathbb{X}}L^i(\mathbf{x},\mathbf{u})+F^i(f(\mathbf{x},\mathbf{u}))-F^i(\mathbf{x}).
\end{align}
According to \eqref{eq:uhat}, $\hat{\mathbf{u}}$ is the feasible control input that results in the smallest value of $\max_{\{i,\mathbf{x}\}\in \mathbb{I}\times\mathbb{X}}L^i(\mathbf{x},\mathbf{u})+F^i(f(\mathbf{x},\mathbf{u}))-F^i(\mathbf{x})$, i.e., the smallest upper bound of $L^i(\mathbf{x},\mathbf{u})+F^i(f(\mathbf{x},\mathbf{u}))-F^i(\mathbf{x})$ over all feasible states and destinations. Since $L^i(\mathbf{x},\hat{\mathbf{u}})+F^i(f(\mathbf{x},\hat{\mathbf{u}}))-F^i(\mathbf{x})$ can be viewed as the additional cost of the alternative destination $i$ by applying $\hat{\mathbf{u}}$ at $\mathbf{x}$ compared with $\mathbf{x}$ itself, we can view $P$ in \eqref{eq:P} as the upper bound of the additional cost of any alternative destination by applying $\hat{\mathbf{u}}$ at any feasible state.
A non-positive $P$ defined in \eqref{eq:P} leads to $L^i(\mathbf{x},\hat{\mathbf{u}})+F^i(f(\mathbf{x},\hat{\mathbf{u}}))-F^i(\mathbf{x}) \leq 0 \; \forall i \; \& \forall \mathbf{x}$, which implies $F^i(f(\mathbf{x},\hat{\mathbf{u}}))-F^i(\mathbf{x}) \leq 0 \; \forall i \; \& \forall \mathbf{x}$ since the running cost $L^i$ is always non-negative. $F^i(f(\mathbf{x},\hat{\mathbf{u}}))-F^i(\mathbf{x}) \leq 0 \; \forall i \; \& \forall \mathbf{x}$ further indicates that by applying $\hat{\mathbf{u}}$ at any feasible state $\mathbf{x}$, the resulting state $f(x,\hat{\mathbf{u}})$ is closer to all destinations compared to $\mathbf{x}$, which is extremely unlikely given the backup plan constrained control problems where we have different destinations. In addition, a non-positive $P$ implies that by applying $\hat{\mathbf{u}}$  at any feasible state, we will not increase the costs of all destinations, which further implies that all the cost functions are minimized at the same state, i.e., we have aligned objectives. Hence, having a non-positive $P$ is a conservative assumption for backup plan constrained control problems.
\cite{bemporad2009multiobjective} proves the asymptotic stability of MMPC with the assumption of $L^i(\mathbf{x},K\mathbf{x})+F^i(f(\mathbf{x},K\mathbf{x}))-F^i(\mathbf{x})\le 0\; \forall \{i,\mathbf{x}\}\in\mathbb{I}\times\mathbb{X}$ (equation 8(a) in \cite{bemporad2009multiobjective}) for some matrix $K$, and by replacing $K\mathbf{x}$ with $\hat{\mathbf{u}}$ for the alternative destinationss, this assumption is identical to having a non-positive $P$ and aligned objectives.
In this paper, in addition to considering a non-positive $P$, we will show that when $P$ is positive, i.e., we have misaligned objectives, the system \eqref{eq:sysmodel} in the closed loop, following the backup plan constrained MPC algorithm, is asymptotically stable.

In the following lemma, we will introduce two desired properties for the tuning parameters of the backup plan constrained MPC algorithm and prove the existence of the tuning parameters satisfying those properties.
\begin{lemma}
\label{lemma:existence}
Let $L^i$ and $F^i, i = 0,1,\cdots,m$ satisfy Assumption~\ref{assumption:1}. Let Assumption \ref{assumption 2} hold. There always exists a set of positive parameters $\delta$, $\gamma^i$ ($i=1,2,\dots,m$), and $\mu$ such that the following two introduced properties hold:
\begin{align}
\label{eq:beta}
    &\beta=\min_{\mathbf{x}\in\mathbb{X}\setminus\{B_\delta\}}\alpha_b^0(\mathbf{x})=\min_{\mathbf{x}\in \mathbb{X}\setminus \{B_\delta\}} 1-\sum_{i=1}^m \gamma^i\frac{\|\mathbf{x}\|}{\max(\mu,\|\mathbf{x}-\mathbf{p}^i\|)}>0,\\
    &\label{eq:al 1}
\beta(L^0(\mathbf{x},K\mathbf{x})+F^0(f(\mathbf{x},K\mathbf{x}))-F^0(\mathbf{x})) \le-(1-\beta) P, \;
\forall \mathbf{x} \in \mathbb{X} \setminus \{B_\delta\},
\end{align}
where $\alpha_b^0(\mathbf{x})$ is defined in \eqref{eq:alpha_b} and matrix $K$ satisfies \eqref{eq:assumption} from Assumption \ref{assumption 2}.
\end{lemma}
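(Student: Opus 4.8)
The plan is to pin down the three parameters one at a time, in the order $\delta$, then $\mu$, then $\gamma^1,\dots,\gamma^m$, so that no circular dependence arises among the quantities in \eqref{eq:beta}--\eqref{eq:al 1}. Write $g(\mathbf{x}):=L^0(\mathbf{x},K\mathbf{x})+F^0(f(\mathbf{x},K\mathbf{x}))-F^0(\mathbf{x})$, with $K$ the gain supplied by Assumption~\ref{assumption 2}. First I would fix any $\delta>0$ with $B_\delta\subseteq\mathbb{X}$ (Assumption~\ref{assumption 2} presupposes that such $\delta$ exist; if in addition $\mathbb{X}\subseteq B_\delta$ then \eqref{eq:al 1} is vacuous and \eqref{eq:beta} holds by convention, so assume $\mathbb{X}\setminus\{B_\delta\}\neq\emptyset$). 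By Assumption~\ref{assumption:1} the maps $L^0,F^0$ are continuous and $f$ is continuous, so $g$ is continuous, and $\mathbb{X}\setminus\{B_\delta\}$ is closed and bounded, hence compact; therefore $g$ attains its maximum on it, and by \eqref{eq:assumption} that maximum is strictly negative. Set $c:=-\max_{\mathbf{x}\in\mathbb{X}\setminus\{B_\delta\}}g(\mathbf{x})>0$; note that $c$ depends only on $\delta$ and $K$, while $P$ from \eqref{eq:P} is a fixed constant independent of $\delta,\mu,\gamma^i$.

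Next I would bound $1-\beta$. Let $R:=\max_{\mathbf{x}\in\mathbb{X}}\|\mathbf{x}\|<\infty$. For every $\mathbf{x}\in\mathbb{X}$ and $i\in\{1,\dots,m\}$, using $\max(\mu,\|\mathbf{x}-\mathbf{p}^i\|)\ge\mu$ we get $\alpha_b^i(\mathbf{x})=\gamma^i\frac{\|\mathbf{x}\|}{\max(\mu,\|\mathbf{x}-\mathbf{p}^i\|)}\le\frac{\gamma^i R}{\mu}$; this is exactly where the floor $\mu>0$ matters, since it prevents $\alpha_b^i$ from blowing up as $\mathbf{x}\to\mathbf{p}^i$. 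Summing over $i$ and using $\alpha_b^i\ge0$ gives $1-\frac{R}{\mu}\sum_{i=1}^m\gamma^i\le\alpha_b^0(\mathbf{x})\le1$ for all $\mathbf{x}\in\mathbb{X}$, hence $1-\frac{R}{\mu}\sum_{i=1}^m\gamma^i\le\beta\le1$.

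Finally I would fix $\mu$ to any positive value (e.g.\ $\mu=1$) and then choose the $\gamma^i>0$ small enough --- dividing one small budget equally among them, say --- that $\frac{R}{\mu}\sum_{i=1}^m\gamma^i\le\frac{c}{2(P_++c)}$, where $P_+:=\max(P,0)$. Then $\beta\ge1-\frac{c}{2(P_++c)}\ge\frac12>0$, which is \eqref{eq:beta}. For \eqref{eq:al 1}, observe it is equivalent to $(1-\beta)P\le\beta c$: if $P\le0$ this is immediate for any $\beta\in(0,1]$ since $(1-\beta)P\le0<\beta c$; if $P>0$, the choice of $\gamma^i$ gives $1-\beta\le\frac{c}{2(P+c)}$, hence $\beta\ge\frac{2P+c}{2(P+c)}\ge\frac{P}{P+c}$, which rearranges to $(1-\beta)P\le\beta c$. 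In both cases, for every $\mathbf{x}\in\mathbb{X}\setminus\{B_\delta\}$ we conclude $\beta\,g(\mathbf{x})\le\beta(-c)=-\beta c\le-(1-\beta)P$, i.e.\ \eqref{eq:al 1}.

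The computation is elementary; the only real care needed is in the \emph{order} of the construction --- $c$ must be determined from $\delta$ and $K$ before the $\gamma^i$ are chosen, so that shrinking the $\gamma^i$ drives $\beta$ as close to $1$ as the ratio $\frac{c}{P+c}$ requires --- and in handling the sign of $P$ uniformly (the conservative $P\le0$ regime versus the genuinely interesting misaligned-objectives regime $P>0$). A secondary point is justifying ``max, not sup'' for $g$ and the strict inequality $c>0$, which is precisely where compactness of $\mathbb{X}\setminus\{B_\delta\}$ and Assumption~\ref{assumption 2} are used.
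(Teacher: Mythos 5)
Your proposal is correct and follows essentially the same route as the paper's proof: fix $\delta$, use compactness and continuity to define the (strictly negative) maximum of $L^0(\mathbf{x},K\mathbf{x})+F^0(f(\mathbf{x},K\mathbf{x}))-F^0(\mathbf{x})$ (your $-c$ is the paper's $k_1$), bound $1-\beta$ by $\tfrac{R}{\mu}\sum_i\gamma^i$, and shrink the $\gamma^i$ until $\beta\ge\tfrac{P}{P+c}$, which is exactly the paper's threshold $\tfrac{P}{P-k_1}$. The only cosmetic difference is that you handle $P>0$ and $P\le 0$ uniformly via $P_+=\max(P,0)$ and a single budget $\tfrac{c}{2(P_++c)}$, whereas the paper treats the two signs of $P$ as separate cases.
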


\begin{proof}
We first consider $P>0$.
Pick any positive parameter $\delta>0$ s.t. $B_\delta \subseteq \mathbb{X}$. Since $\gamma_i$ and $\mu$ are positive, $\gamma^i\frac{\|\mathbf{x}\|}{\max(\mu,\|\mathbf{x}-\mathbf{p}^i\|)} \le \gamma^i \frac{\|\mathbf{x}\|}{\mu} \le \gamma^i \frac{\max_{\mathbf{x}\in\mathbb{X}\setminus \{B_\delta\}} \|\mathbf{x}\|}{\mu}$ holds $\; \forall (i,\mathbf{x}) \in\mathbb{I}\times\mathbb{X}\setminus \{B_\delta\}$, and it follows that $\sum_{i=1}^m  \gamma^i\frac{\|\mathbf{x}\|}{\max(\mu,\|\mathbf{x}-\mathbf{p}^i\|)}\le \sum_{i=1}^m\gamma^i \frac{\max_{\mathbf{x}\in\mathbb{X}\setminus \{B_\delta\}}\|\mathbf{x}\|}{\mu}$ holds $\forall \mathbf{x} \in \mathbb{X}\setminus \{B_\delta\}$. Let $z = \max_{\mathbf{x}\in\mathbb{X}\setminus \{B_\delta\}}\|\mathbf{x}\| > 0$. It follows that:
\begin{align}
\label{eq:existence of beta 1}
    &1-\sum_{i=1}^m  \gamma^i\frac{\|\mathbf{x}\|}{\max(\mu,\|\mathbf{x}-\mathbf{p}^i\|)} \ge 1-z \frac{\sum_{i=1}^m \gamma^i}{\mu}, \;
    \forall \mathbf{x} \in \mathbb{X}\setminus \{B_\delta\}.
\end{align}
Define $k_1 = \max_{\mathbf{x}\in\mathbb{X}\setminus \{B_\delta\}} L^0(\mathbf{x},K\mathbf{x})+F^0(f(\mathbf{x},K\mathbf{x}))-F^0(\mathbf{x})$. From Assumption \ref{assumption:1}, $L^i, F^i$, and $f$ are continuous. Thus, the function $L^0(\mathbf{x},K\mathbf{x})+F^0(f(\mathbf{x},K\mathbf{x}))-F^0(\mathbf{x})$ is also continuous with respect to $\mathbf{x}$. Since $\mathbb{X}$ is compact, $\mathbb{X}\setminus\{B_\delta\}$ is also compact. It follows that the maximum can be achieved, and that $k_1$ is well-defined based on Extreme Value Theorem \citep{protter2006basic}. Similarly, $\beta, P, \text{and } z$ are also well-defined because all of the domains are compact and functions are continuous. 
Since $K$ satisfies \eqref{eq:assumption}, we have $k_1 < 0$. Considering $k_1 < 0$ and $P>0$, we have $0 <\frac{P}{P-k_1} < 1$. Then, we can always find sufficiently small $\gamma^i$ and sufficiently large $\mu$ s.t. \eqref{eq:beta} holds and $0 < \frac{\sum_{i=1}^m\gamma^i}{\mu} \le \frac{(1-\frac{P}{P-k_1})}{z}$,
which implies
\begin{align}
\label{eq:existence of beta 2}
    1-z\frac{\sum_{i=1}^m \gamma^i}{\mu} \ge \frac{P}{P-k_1}.
\end{align}  
By taking into account of \eqref{eq:existence of beta 1} and \eqref{eq:existence of beta 2}, we have
\begin{align}
\label{eq:existence of beta 3}
    1-\sum_{i=1}^m  \gamma^i\frac{\|\mathbf{x}\|}{\max(\mu,\|\mathbf{x}-\mathbf{p}^i\|)} \ge \frac{P}{P-k_1}, \ \; \forall \mathbf{x} \in \mathbb{X}\setminus \{B_\delta\}.
\end{align}
Then, considering \eqref{eq:existence of beta 3} and definition of $\beta$ in \eqref{eq:beta}, we have:
\begin{align}
    \beta = \min_{\mathbf{x}\in\mathbb{X}\setminus\{B_\delta\}} 1-\sum_{i=1}^m  \gamma^i\frac{\|\mathbf{x}\|}{\max(\mu,\|\mathbf{x}-\mathbf{p}^i\|)} \geq \frac{P}{P-k_1},
\end{align}
which implies $\beta k_1 \le P(\beta-1)$. According to the definition of $k_1$ and the fact that $\beta > 0$, we have $\beta(L^0(\mathbf{x},K\mathbf{x})+F^0(f(\mathbf{x},K\mathbf{x}))-F^0(\mathbf{x})) \le \beta k_1 \le P(\beta-1) \; \forall \mathbf{x} \in \mathbb{X}\setminus \{B_\delta\}$, and we obtain \eqref{eq:al 1}. 

Next, we consider $P \leq 0$. Pick any positive parameter $\delta>0$ s.t. $B_\delta \subseteq \mathbb{X}$. Since $\mathbb{X}$ is a compact set, we can always find sufficiently small $\gamma^i (i=1,2,\cdots,m)$ and sufficiently large $\mu$ s.t. property \eqref{eq:beta} holds. Since $\delta$ and $\gamma^i$ are positive parameters, we have $\beta<1$ according to \eqref{eq:beta}. With $L^0(\mathbf{x},K\mathbf{x})+F^0(f(\mathbf{x},K\mathbf{x}))-F^0(\mathbf{x}) < 0 \; \forall \mathbf{x} \in \mathbb{X}\setminus \{B_\delta\}$ from Assumption \ref{assumption 2}, $0<\beta<1$, and $P\le0$, it follows that \eqref{eq:al 1} holds.

Thus, the existence of $\delta,\gamma^i,\mu$ satisfying \eqref{eq:beta} and \eqref{eq:al 1} is guaranteed.
\end{proof}

\begin{remark}
\label{remark 1}
In \eqref{eq:al 1}, we design $\beta(L^0(\mathbf{x},K\mathbf{x})+F^0(f(\mathbf{x},K\mathbf{x}))-F^0(\mathbf{x})) \le-(1-\beta) P$ to hold for $\mathbb{X} \setminus \{B_\delta\}$ instead of $\mathbb{X}$, and the reason for excluding $B_\delta$ is explained as follows. According to the definition of $\beta$ in \eqref{eq:beta}, when $\beta=1$, we have $\boldsymbol{\alpha}_b(\mathbf{x})=[1,0,0,\dots,0]^\top\;\forall \mathbf{x}\in\mathbb{X}\setminus\{B_\delta\}$, which only considers the primary destination without the backup plan safety. Thus, $\beta=1$ is unfavorable and we need $0<\beta<1$ to include alternative destinations. When we have a positive $P$, the only possible solution to satisfy $\beta(L^0(\mathbf{x},K\mathbf{x})+F^0(f(\mathbf{x},K\mathbf{x}))-F^0(\mathbf{x})) \le-(1-\beta) P$ at $\mathbf{x}=0$ is $\beta = 1$.
In order to ensure the existence of $\beta < 1$, we design $\beta(L^0(\mathbf{x},K\mathbf{x})+F^0(f(\mathbf{x},K\mathbf{x}))-F^0(\mathbf{x})) \le-(1-\beta) P$ to hold for $\mathbf{x}\in\mathbb{X}\setminus\{B_\delta\}$, excluding the origin. 
 
\end{remark}

As we previously explained, having a non-positive $P$ implies aligned objectives, i.e., the cost functions of all destinations are minimized at the same state, and thus it is a conservative assumption. In the following theorem, we will specifically consider $P>0$, and show that when the tuning parameters of the backup plan constrained MPC algorithm satisfy the two properties introduced in Lemma \ref{lemma:existence}, we can achieve a desired property related to the overall cost, which will be leveraged to show the closed-loop asymptotic stability. 
Before presenting the theorem, we need the following definitions.
Given any design of the weight vector $\boldsymbol{\alpha}(\mathbf{x}_k)$, define $\mathbf{U}(\mathbf{x}_k,\boldsymbol{\alpha}(\mathbf{x}_k)) = [\mathbf{U}^0(\mathbf{x}_k,\boldsymbol{\alpha}(\mathbf{x}_k))^\top, \mathbf{U}^1(\mathbf{x}_k,\boldsymbol{\alpha}(\mathbf{x}_k)),\dots,\mathbf{U}^m(\mathbf{x}_k,\boldsymbol{\alpha}(\mathbf{x}_k))]^\top$ as the solution to the feasibility maximization \eqref{eq:mmpc} for $\mathbf{x}=\mathbf{x}_k$ and $\boldsymbol{\alpha}=\boldsymbol{\alpha}(\mathbf{x}_k)$. Given any design of the weight vector $\alpha(\mathbf{x}_k)$ and a positive constant $\boldsymbol{\alpha}$, denote the set $\mathbb{\hat{X}}(\boldsymbol{\alpha}(\mathbf{x}_k),\boldsymbol{\alpha})$ as:
\begin{align}
\label{eq:xhat}
\mathbb{\hat{X}}(\alpha(\mathbf{x}_k),\delta) = \{\mathbf{x}_k \mid \mathbf{x}_k \in\mathbb{X}\setminus \{B_\delta\} \;\&\; \mathbf{x}_{k,f}\in \mathbb{X}\setminus \{B_\delta\}\},
\end{align}
where $B_\delta$ is an open ball centered at the origin with radius $\delta$ and $\mathbf{x}_{k,f}$ is the resulting state from simulating $\mathbf{U}^0(\mathbf{x}_k,\alpha(\mathbf{x}_k))$ from $\mathbf{U}(\mathbf{x}_k,\alpha(\mathbf{x}_k))$ on system \eqref{eq:sysmodel} at $\mathbf{x}_k$. In other words, for $\mathbf{x}_k\in\mathbb{\hat{X}}(\alpha(\mathbf{x}_k),\delta)$, both $\mathbf{x}_k$ and its final state toward the primary objective $\mathbf{x}_{k,f}$ are outside the ball $B_\delta$ around the origin, where $\mathbf{x}_{k,f}$ is related to the solution to the feasibility maximization problem \eqref{eq:mmpc} for $\alpha=\alpha(\mathbf{x}_k)$.

\begin{theorem}
\label{theorem1}
Let $L^i$ and $F^i, i = 0,1,\cdots,m$, satisfy Assumption~\ref{assumption:1}. Let Assumption \ref{assumption 2} hold. When $P> 0$, for the design of the weight vector in the feasibility maximization problem \eqref{eq:mmpc}, if we set $\boldsymbol{\alpha}_*(x)=\boldsymbol{\alpha}_b(\mathbf{x}) \; \forall \mathbf{x} \in \mathbb{X}$ where $\boldsymbol{\alpha}_b(\mathbf{x})$ is defined in \eqref{eq:alpha_b} with positive parameters $\delta$, $\gamma^i$ ($i=1,2,\dots,m$), and $\mu$ satisfying conditions \eqref{eq:beta} and \eqref{eq:al 1}, then the following condition holds:
    \begin{equation}
    \label{eq:2nd inequality}
    \begin{split}
        \boldsymbol{\alpha}_*(\mathbf{x}_k)^\top \mathbf{J}(\mathbf{x}_{k+1},\mathbf{U}_s(\mathbf{x}_{k+1})) \le \boldsymbol{\alpha}_*(\mathbf{x}_k)^\top \mathbf{J}(\mathbf{x}_k,\mathbf{U}_*(\mathbf{x}_k)), \;
        \forall \mathbf{x}_k \in \mathbb{\hat{X}}(\boldsymbol{\alpha}_*(\mathbf{x}_k), \delta)
        \end{split},
    \end{equation}
where $\mathbf{U}_*(\mathbf{x}_k)$ is the solution to the feasibility maximization problem \eqref{eq:mmpc} for $\mathbf{x}=\mathbf{x}_k$ and $\boldsymbol{\alpha}=\boldsymbol{\alpha}_*(\mathbf{x}_k)$, $\mathbf{x}_{k+1}$ is the state by simulating the first element of $\mathbf{U}_*(\mathbf{x}_k)$ on system \eqref{eq:sysmodel} at $\mathbf{x}_k$, $\mathbb{\hat{X}}(\boldsymbol{\alpha}_*(\mathbf{x}_k),\delta)$ is defined in \eqref{eq:xhat}, and $\mathbf{U}_s(\mathbf{x}_{k+1})$ is defined in \eqref{eq:u_s}, constructed with matrix $K$ satisfying \eqref{eq:assumption} and $\hat{\mathbf{u}}$ defined in \eqref{eq:uhat}.
\end{theorem}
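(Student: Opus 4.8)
The plan is to carry out the familiar MPC value-decrease argument: use the shifted sequence $\mathbf{U}_s(\mathbf{x}_{k+1})$ defined in \eqref{eq:u_s} as a feasible (generally suboptimal) competitor at the successor state $\mathbf{x}_{k+1}$, and compare its weighted cost term by term with the optimal weighted cost at $\mathbf{x}_k$. Under the hypothesis $\boldsymbol{\alpha}_*(\mathbf{x}_k)=\boldsymbol{\alpha}_b(\mathbf{x}_k)$ the \emph{same} weight appears on both sides of \eqref{eq:2nd inequality}, so it suffices to show that the weighted sum of the per-component cost increments is nonpositive. I would therefore write $\boldsymbol{\alpha}_b(\mathbf{x}_k)^\top\mathbf{J}(\mathbf{x},\mathbf{U}) = \alpha_b^0(\mathbf{x}_k)J^0(\mathbf{x},\mathbf{U}^0) + \sum_{i=1}^m \frac{\alpha_b^i(\mathbf{x}_k)}{N-1}\sum_{p=0}^{N-2}J^i(\mathbf{x},\mathbf{U}^i_p)$ and treat the primary component and each alternative component separately.

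For the primary component, $\mathbf{U}^0_s(\mathbf{x}_{k+1})$ is $\mathbf{U}^0_*(\mathbf{x}_k)$ with the executed input $\mathbf{u}^*_k$ removed and $K\mathbf{x}_{k,f}$ appended; since $\mathbf{x}_{k+1}=f(\mathbf{x}_k,\mathbf{u}^*_k)$, the trajectory it induces is the tail of the optimal primary trajectory followed by one step of the feedback $K$, so the cost form \eqref{eq:costSmallJ} telescopes to
\[
J^0(\mathbf{x}_{k+1},\mathbf{U}^0_s(\mathbf{x}_{k+1})) - J^0(\mathbf{x}_k,\mathbf{U}^0_*(\mathbf{x}_k)) = -L^0(\mathbf{x}_k,\mathbf{u}^*_k) + \Delta^0,
\]
where $\Delta^0 = L^0(\mathbf{x}_{k,f},K\mathbf{x}_{k,f}) + F^0(f(\mathbf{x}_{k,f},K\mathbf{x}_{k,f})) - F^0(\mathbf{x}_{k,f})$. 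The definition of $\mathbb{\hat{X}}(\boldsymbol{\alpha}_*(\mathbf{x}_k),\delta)$ in \eqref{eq:xhat} guarantees $\mathbf{x}_{k,f}\in\mathbb{X}\setminus\{B_\delta\}$, hence Assumption~\ref{assumption 2} gives $\Delta^0<0$ and, quantitatively, property \eqref{eq:al 1} gives $\beta\Delta^0 \le -(1-\beta)P$.

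For each alternative component $i$, I would track the effect of the shift on the sub-blocks $\mathbf{U}^i_p$: the shared leading input $\mathbf{u}^*_k$ is dropped, the inherited alternative inputs are carried over, and the fixed input $\hat{\mathbf{u}}$ from \eqref{eq:uhat} is appended to supply the one extra step created when the horizon moves forward. A telescoping identity parallel to the primary case then gives, for each carried-over sub-block, a cost increment equal to $-L^i(\mathbf{x}_k,\mathbf{u}^*_k)$ plus a term of the form $L^i(\tilde{\mathbf{x}},\hat{\mathbf{u}}) + F^i(f(\tilde{\mathbf{x}},\hat{\mathbf{u}})) - F^i(\tilde{\mathbf{x}})$ at the relevant terminal state $\tilde{\mathbf{x}}\in\mathbb{X}$, which is at most $P$ by the very definition \eqref{eq:P} of $P$; the sub-block that is discarded at step $k$ can only lower the average since running and terminal costs are nonnegative, and the sub-block that must be re-initialized at step $k+1$ is bounded in the same manner. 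Averaging over $p=0,\dots,N-2$ and dropping the nonpositive running-cost terms yields $\mathbf{J}^i(\mathbf{x}_{k+1},\mathbf{U}^i_s(\mathbf{x}_{k+1})) - \mathbf{J}^i(\mathbf{x}_k,\mathbf{U}^i_*(\mathbf{x}_k)) \le P$. The bookkeeping of how the $\frac{N(N-1)}{2}m$ virtual inputs reindex under the shift, together with the careful treatment of the two boundary sub-blocks, is the step I expect to be the main obstacle; everything else is routine.

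Combining the bounds and using $\sum_{i=1}^m\alpha_b^i(\mathbf{x}_k) = 1-\alpha_b^0(\mathbf{x}_k)$, the weighted increment is at most $\alpha_b^0(\mathbf{x}_k)\Delta^0 + (1-\alpha_b^0(\mathbf{x}_k))P$, the remaining terms being nonpositive. Since $\mathbf{x}_k\in\mathbb{X}\setminus\{B_\delta\}$, property \eqref{eq:beta} gives $\alpha_b^0(\mathbf{x}_k)\ge\beta>0$, and then $\Delta^0<0$ together with $P>0$ yields $\alpha_b^0(\mathbf{x}_k)\Delta^0 + (1-\alpha_b^0(\mathbf{x}_k))P \le \beta\Delta^0 + (1-\beta)P \le 0$, the last inequality being exactly \eqref{eq:al 1}. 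This establishes \eqref{eq:2nd inequality}.
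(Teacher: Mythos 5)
Your proposal is correct and follows essentially the same route as the paper's proof: shift-and-append $\mathbf{U}_s$ as the comparison sequence, telescope the primary cost to the feedback increment $\Delta^0$ bounded via Assumption~\ref{assumption 2} and \eqref{eq:al 1}, bound each alternative sub-block increment by $P$ via \eqref{eq:P}, and close with $\alpha_b^0(\mathbf{x}_k)\ge\beta$ and $\sum_{i\ge1}\alpha_b^i(\mathbf{x}_k)\le 1-\beta$. The only cosmetic differences are that you fold the two weight inequalities into a single monotonicity-in-$\alpha_b^0$ step and discard the nonnegative running-cost terms $L^i(\mathbf{x}_k,\mathbf{u}_k^*)$ earlier than the paper does (the paper retains them to state the sharper bound \eqref{eq:difference}, which it reuses later in Theorem~\ref{theorem3}); the sub-block reindexing you flag as the main obstacle is likewise asserted rather than spelled out in the paper's own derivation of \eqref{eq:proof end}.
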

Before stating the main proof, we summarize the flow of it in a few steps. Since $\beta$ is defined as the minimum of $\alpha_b^0(\mathbf{x}_k)$, $(1-\beta)$ can be viewed as the maximum of the weight for any alternative destination $\alpha_b^i(\mathbf{x}_k)$. As $P$ indicates an upper bound of the additional cost of any alternative destination by applying $\hat{\mathbf{u}}$ at any feasible state, condition \eqref{eq:al 1} implies that by applying $\mathbf{U}_s(\mathbf{x}_k)$ defined in \eqref{eq:u_s} with matrix $K$ and control input $\hat{\mathbf{u}}$, the weighted decrease in the cost function of the primary destination dominates the weighted increase in the cost functions of the alternative destinations, which leads to \eqref{eq:2nd inequality}.
\begin{proof}
Since Lemma \ref{lemma:existence} and Theorem \ref{theorem1} share the same assumptions, Lemma \ref{lemma:existence} holds, and hence we are always able to find a set of positive parameters $\delta$, $\gamma^i$, and $\mu$ satisfying conditions \eqref{eq:beta} and \eqref{eq:al 1}.
Since $\delta$, $\gamma^i$, and $\mu$ are all positive, we have $0< \beta < 1$ from the definition of $\beta$ in \eqref{eq:beta}, and $\alpha_*^i(\mathbf{x})$ satisfies $\alpha_*^i \in [0,1] \; \forall i$ and $\sum_{i=0}^{m}\alpha_*^i=1$ $\forall \{i,x\} \in \mathbb{I} \times \mathbb{X}\setminus \{B_\delta\}$ based on the definition of $\boldsymbol{\alpha}_b(\mathbf{x}_k)$ in \eqref{eq:alpha_b}. 
Additionally, we have $\alpha_*^0(\mathbf{x})=\alpha_b^0(\mathbf{x})\ge \beta > 0 \; \forall \mathbf{x} \in \mathbb{X}\setminus \{B_\delta\}$.
Since $L^0(\mathbf{x},K\mathbf{x})+F^0(f(\mathbf{x},K\mathbf{x}))-F^0(\mathbf{x}) < 0 \; \forall \mathbf{x} \in \mathbb{X}\setminus\{B_\delta\}$ from \eqref{eq:assumption} in Assumption \ref{assumption 2} and $\alpha_*^0(\mathbf{x}) \ge \beta \; \forall\; \mathbf{x} \in \mathbb{X}\setminus \{B_\delta\}$ from \eqref{eq:beta}, we have:
\begin{align}
\label{eq:LHS_1}
    &\alpha_*^0(\mathbf{x}_k)(L^0(\mathbf{x}_{k,f},K\mathbf{x}_{k,f})
    +F^0(f(\mathbf{x}_{k,f},K\mathbf{x}_{k,f}))-F^0(\mathbf{x}_{k,f}))\nnum\\
    &\le \beta(L^0(\mathbf{x}_{k,f},K\mathbf{x}_{k,f})
     +F^0(f(\mathbf{x}_{k,f},K\mathbf{x}_{k,f}))-F^0(\mathbf{x}_{k,f})),\nnum\\
     & \forall \mathbf{x}_k \in \mathbb{X}\setminus\{B_\delta\}, \mathbf{x}_{k,f}\in \mathbb{X}\setminus\{B_\delta\}.
\end{align}
According to the definition of $P$ in \eqref{eq:P}, we have
\begin{align}
  \label{eq:RHS_3}
  &P \ge L^i(\mathbf{x},\hat{\mathbf{u}})+F^i(f(\mathbf{x},\hat{\mathbf{u}}))-F^i(\mathbf{x}),\;
  \forall \{i,\mathbf{x}\}\in\mathbb{I} \times \mathbb{X}.
\end{align}
Following \eqref{eq:RHS_3}, we have
\begin{align}
    \label{eq:RHS_4}
    &P \ge \frac{1}{N-1} \sum_{p=0}^{N-2} L^i(\mathbf{x}_{p},\hat{\mathbf{u}})+F^i(f(\mathbf{x}_{p},\hat{\mathbf{u}}))-F^i(\mathbf{x}_{p}),
    \;\forall \{i,\mathbf{x}_{p}\}\in\mathbb{I} \times \mathbb{X},
\end{align}
which is the averaged value version of \eqref{eq:RHS_3}.
Now, since $\alpha_*^i(\mathbf{x}) \ge 0 \; \forall \{i,\mathbf{x}\} \in \mathbb{I} \times \mathbb{X}\setminus \{B_\delta\}$, it follows from \eqref{eq:RHS_4} that
\begin{align}
\label{eq:RHS_5}
    -\sum_{i=1}^m\alpha_*^i(\mathbf{x}_k) P \le -\sum_{i=1}^m\alpha_*^i(\mathbf{x}_k) \frac{1}{N-1}
    \sum_{p=0}^{N-2} L^i(\mathbf{x}_{p},\hat{\mathbf{u}})+F^i(f(\mathbf{x}_{p},\hat{\mathbf{u}}))-F^i(\mathbf{x}_{p}), \;
    \forall \{\mathbf{x}_k,\mathbf{x}_{p}\} \in \mathbb{X}\setminus\{B_\delta\}\times \mathbb{X}.
\end{align}
By the definition of $\boldsymbol{\alpha}_b(\mathbf{x}_k)$ in \eqref{eq:alpha_b} and $\beta$ in \eqref{eq:beta}, with $\boldsymbol{\alpha}_*(\mathbf{x}_k)=\boldsymbol{\alpha}_b(\mathbf{x}_k)$, it follows that 
\begin{align}
\label{eq:alpha_i bound}
\sum_{i=1}^m \alpha_*^i(\mathbf{x}_k) \le (1-\beta), \; \forall \mathbf{x}_k\in \mathbb{X}\setminus\{B_\delta\}.
\end{align}
Considering \eqref{eq:alpha_i bound} and the fact that $P>0$, we have
\begin{equation}
\label{eq:RHS_6}
-(1-\beta) P \le -\sum_{i=1}^m \alpha_*^i(\mathbf{x}_k) P, \ \; \forall \mathbf{x}_k \in \mathbb{X}\setminus\{B_\delta\}.
\end{equation}
Combining \eqref{eq:RHS_5} and \eqref{eq:RHS_6}, we achieve
\begin{align}
\label{eq:RHS}
-(1-\beta) P \le- \sum_{i=1}^m\alpha_*^i(\mathbf{x}_k)\frac{1}{N-1}\sum_{p=0}^{N-2}L^i(\mathbf{x}_{p},\hat{\mathbf{u}})\!+\!F^i(f(\mathbf{x}_{p},\hat{\mathbf{u}}))\!-\!F^i(\mathbf{x}_{p}), \;
\forall \{\mathbf{x}_k,\mathbf{x}_{p}\} \in \mathbb{X}\setminus\{B_\delta\}\times \mathbb{X}.
\end{align}
Following the inequalities \eqref{eq:LHS_1}, \eqref{eq:al 1}, and \eqref{eq:RHS} in order, we can relate the left-hand side of \eqref{eq:LHS_1} and right-hand side of \eqref{eq:RHS} and get
\begin{align}
\label{eq:multi}
&\alpha_*^0(\mathbf{x}_k)(L^0(\mathbf{x}_{k,f},K\mathbf{x}_{k,f})+F^0(f(\mathbf{x}_{k,f},K\mathbf{x}_{k,f}))-F^0(\mathbf{x}_{k,f}))\nnum\\ &\le -\sum_{i=1}^m\alpha_*^i(\mathbf{x}_k)\frac{1}{N-1}\sum_{p=0}^{N-2}L^i(\mathbf{x}_p,\hat{\mathbf{u}})\!+\!F^i(f(\mathbf{x}_p,\hat{\mathbf{u}}))\!-\!F^i(\mathbf{x}_p), \nnum \\
&\forall \{\mathbf{x}_k,\mathbf{x}_{k,f},\mathbf{x}_p\} \in \mathbb{X}\setminus\{B_\delta\}\times \mathbb{X}\setminus\{B_\delta\}\times \mathbb{X}.
\end{align}
Inequality \eqref{eq:2nd inequality} in backup plan constrained control problems can be reformulated as:
\begin{align}
\label{eq:proof end}
    &\boldsymbol{\alpha}_*(\mathbf{x}_k)^\top \mathbf{J}(\mathbf{x}_{k+1},\mathbf{U}_s(\mathbf{x}_{k+1})) - \boldsymbol{\alpha}_*(\mathbf{x}_k)^\top \mathbf{J}(\mathbf{x}_k,\mathbf{U}_*(\mathbf{x}_k)) \nnum\\
    &=\sum_{i=0}^m \alpha_*^i(\mathbf{x}_k)[\mathbf{J}^i(\mathbf{x}_{k+1},\mathbf{U}_s^i(\mathbf{x}_{k+1}))-\mathbf{J}^i(\mathbf{x}_k,{\mathbf{U}}_*^i(\mathbf{x}_k))]
    \nnum\\
    &=\alpha_*^0(\mathbf{x}_k)[\mathbf{J}^0(\mathbf{x}_{k+1},{\mathbf{U}}_s^0(\mathbf{x}_{k+1}))-\mathbf{J}^0(\mathbf{x}_k,{\mathbf{U}}_*^0(\mathbf{x}_k))]\nnum\\
    &+ \sum_{i=1}^m \alpha_*^i(\mathbf{x}_k)[\mathbf{J}^i(\mathbf{x}_{k+1},\mathbf{U}_s^i(\mathbf{x}_{k+1}))-\mathbf{J}^i(\mathbf{x}_k,{\mathbf{U}}_*^i(\mathbf{x}_k))]
    \nnum\\
    & =\alpha_*^0(\mathbf{x}_k)[J^0(\mathbf{x}_{k+1},\mathbf{U}_s^0(\mathbf{x}_{k+1}))-J^0(\mathbf{x}_k,\mathbf{U}_*^0(\mathbf{x}_k))]\nnum\\  
    &+ \sum_{i=1}^m \alpha_*^i(\mathbf{x}_k)\frac{1}{N-1}\sum_{p=0}^{N-2}J^i(\mathbf{x}_{k+1},\mathbf{U}^i_{s,p}(\mathbf{x}_{k+1}))-J^i(\mathbf{x}_k,\mathbf{U}^i_{*,p}(\mathbf{x}_k))
    \nnum\\
    &=\alpha_*^0(\mathbf{x}_k)(L^0(\mathbf{x}_{k,f},K\mathbf{x}_{k,f})+ F^0(f(\mathbf{x}_{k,f},K\mathbf{x}_{k,f}))-F^0(\mathbf{x}_{k,f})-
    L^0(\mathbf{x}_k,\mathbf{u}^*_k))  \nnum \\
    &+\sum_{i=1}^m\alpha_*^i(\mathbf{x}_k)\frac{1}{N-1} \sum_{p=0}^{N-2}L^i(\mathbf{x}_{i,p},\hat{\mathbf{u}})\!+\!F^i(f(\mathbf{x}_{i,p},\hat{\mathbf{u}}))\!-\!F^i(\mathbf{x}_{i,p})-L^i(\mathbf{x}_k,\mathbf{u}^*_k), 
\end{align}
where $\mathbf{u}^*_k$ is the first element of $\mathbf{U}_*^0(\mathbf{x}_k)$, $\mathbf{x}_{k,f}$ is the final state of applying $\mathbf{U}_*^0(\mathbf{x}_k)$ at $\mathbf{x}_k$, and the point $\mathbf{x}_{i,p}$ is the final state from the application of
$\mathbf{U}^i_{*,p}(\mathbf{x}_k)$ at $\mathbf{x}_k$. For $\mathbf{x}_k\in \hat{\mathbb{X}}(\boldsymbol{\alpha}_*(\mathbf{x}_k),\delta)$, according to the definition of $\hat{\mathbb{X}}(\boldsymbol{\alpha}_*(\mathbf{x}_k),\delta)$ in \eqref{eq:xhat}, we have $\mathbf{x}_k \in \mathbb{X}\setminus \{B_\delta\}$ and $\mathbf{x}_{k,f}\in \mathbb{X}\setminus \{B_\delta\}$. Since $\mathbf{U}_*(\mathbf{x}_k)$ is a feasible control input of the feasibility maximization problem \eqref{eq:mmpc}, we also have $\mathbf{x}_{i,p}\in \mathbb{X}$, where $i=1,2,\dots,m$ and $p=0,1,\dots,N-2$.
By following \eqref{eq:proof end} and applying the results in \eqref{eq:multi}, we achieve
\begin{align}
    \label{eq:difference}
    &\boldsymbol{\alpha}_*(\mathbf{x}_k)^\top \mathbf{J}(\mathbf{x}_{k+1},\mathbf{U}_s(\mathbf{x}_{k+1})) - \boldsymbol{\alpha}_*(\mathbf{x}_k)^\top \mathbf{J}(\mathbf{x}_k,\mathbf{U}_*(\mathbf{x}_k)) \le -\sum_{i=0}^m \alpha_*^i(\mathbf{x}_k)L^i(\mathbf{x}_k,\mathbf{u}^*_k), \; \forall \mathbf{x}_k\in \hat{\mathbb{X}}(\boldsymbol{\alpha}_*(\mathbf{x}_k),\delta).
\end{align}
Since $L^i(\cdot) \ge 0$ and $\alpha_*^i(\cdot)\ge0 \; \forall i$, it follows from \eqref{eq:difference} that \eqref{eq:2nd inequality} holds.
\end{proof}

Theorem \ref{theorem1} states that when $P>0$ and $\boldsymbol{\alpha}_*(\mathbf{x}_k)$ is formulated as the baseline weight vector \eqref{eq:alpha_b} with parameters satisfying \eqref{eq:beta} and \eqref{eq:al 1}, property \eqref{eq:2nd inequality} holds.
Next, we will introduce a theorem similar to Theorem \ref{theorem1} when we have $P\le0$.
\begin{theorem}
\label{theorem2}
Let $L^i$ and $F^i, i = 0,1,\cdots,m$, satisfy Assumption~\ref{assumption:1}. Let Assumption \ref{assumption 2} hold. When $P\le0$, for any design of $\boldsymbol{\alpha}_*(\mathbf{x}_k)$ in the feasibility maximization problem \eqref{eq:mmpc} satisfying $\alpha^i_* \in [0,1] \; \forall i$ and $\sum_{i=0}^{m}\alpha^i_*=1$, the following property holds:
\begin{equation}
    \label{eq:2nd inequality 2}
    \begin{split}
        \boldsymbol{\alpha}_*(\mathbf{x}_k)^\top \mathbf{J}(\mathbf{x}_{k+1},\mathbf{U}_s(\mathbf{x}_{k+1})) \le \boldsymbol{\alpha}_*(\mathbf{x}_k)^\top \mathbf{J}(\mathbf{x}_k,\mathbf{U}_*(\mathbf{x}_k)), \;
        \forall \mathbf{x}_k \in \mathbb{X}\setminus \{B_\delta\}
        \end{split},
    \end{equation}
where $\mathbf{U}_*(\mathbf{x}_k)$ is the solution to the feasibility maximization problem \eqref{eq:mmpc} for $\mathbf{x}=\mathbf{x}_k$ and $\boldsymbol{\alpha}=\boldsymbol{\alpha}_*(\mathbf{x}_k)$, $\mathbf{x}_{k+1}$ is the state by simulating the first element of $\mathbf{U}_*(\mathbf{x}_k)$ on system \eqref{eq:sysmodel} at $\mathbf{x}_k$, and $\mathbf{U}_s(\mathbf{x}_{k+1})$ is defined in \eqref{eq:u_s}, constructed with matrix $K$ satisfying \eqref{eq:assumption} and $\hat{\mathbf{u}}$ defined in \eqref{eq:uhat}.
\end{theorem}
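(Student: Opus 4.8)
The plan is to recycle the cost-difference expansion already carried out for Theorem~\ref{theorem1}; the hypothesis $P\le 0$ lets us replace the $\beta$-weighted dominance estimate \eqref{eq:al 1} by a crude, term-by-term sign argument, so the proof is a strict simplification of the proof of Theorem~\ref{theorem1}.

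First I would write out
\[
\Delta := \boldsymbol{\alpha}_*(\mathbf{x}_k)^\top \mathbf{J}(\mathbf{x}_{k+1},\mathbf{U}_s(\mathbf{x}_{k+1})) - \boldsymbol{\alpha}_*(\mathbf{x}_k)^\top \mathbf{J}(\mathbf{x}_k,\mathbf{U}_*(\mathbf{x}_k))
\]
exactly as in \eqref{eq:proof end}. Using that $\mathbf{U}_s^i(\mathbf{x}_{k+1})$ is obtained from $\mathbf{U}_*^i(\mathbf{x}_k)$ by deleting the first input and appending $K\mathbf{x}_{k,f}$ when $i=0$ and $\hat{\mathbf{u}}$ when $i\ge 1$, together with the averaging definition of $\mathbf{J}^i$ for $i\ge 1$, one gets
\begin{align*}
\Delta &=\alpha_*^0(\mathbf{x}_k)\big(L^0(\mathbf{x}_{k,f},K\mathbf{x}_{k,f})+F^0(f(\mathbf{x}_{k,f},K\mathbf{x}_{k,f}))-F^0(\mathbf{x}_{k,f})-L^0(\mathbf{x}_k,\mathbf{u}^*_k)\big)\\
&\quad+\sum_{i=1}^m\alpha_*^i(\mathbf{x}_k)\,\frac{1}{N-1}\sum_{p=0}^{N-2}\big(L^i(\mathbf{x}_{i,p},\hat{\mathbf{u}})+F^i(f(\mathbf{x}_{i,p},\hat{\mathbf{u}}))-F^i(\mathbf{x}_{i,p})-L^i(\mathbf{x}_k,\mathbf{u}^*_k)\big),
\end{align*}
where $\mathbf{x}_{i,p}\in\mathbb{X}$ since $\mathbf{U}_*(\mathbf{x}_k)$ is feasible for \eqref{eq:mmpc}.

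Then I would bound the two blocks separately. For $i\ge 1$, the definition of $P$ in \eqref{eq:P} gives $L^i(\mathbf{x}_{i,p},\hat{\mathbf{u}})+F^i(f(\mathbf{x}_{i,p},\hat{\mathbf{u}}))-F^i(\mathbf{x}_{i,p})\le P\le 0$ for every admissible $i$ and $\mathbf{x}_{i,p}$; with $L^i\ge 0$ each inner summand is $\le -L^i(\mathbf{x}_k,\mathbf{u}^*_k)$, so the $i\ge 1$ block is $\le-\sum_{i=1}^m\alpha_*^i(\mathbf{x}_k)L^i(\mathbf{x}_k,\mathbf{u}^*_k)$. This is the one place the argument genuinely departs from Theorem~\ref{theorem1}, and crucially it needs neither $\beta$ nor \eqref{eq:al 1}: each alternative-destination contribution is individually non-positive. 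For $i=0$, $-L^0(\mathbf{x}_k,\mathbf{u}^*_k)\le 0$ while $L^0(\mathbf{x}_{k,f},K\mathbf{x}_{k,f})+F^0(f(\mathbf{x}_{k,f},K\mathbf{x}_{k,f}))-F^0(\mathbf{x}_{k,f})<0$ by \eqref{eq:assumption} of Assumption~\ref{assumption 2}, so the $i=0$ block is $\le-\alpha_*^0(\mathbf{x}_k)L^0(\mathbf{x}_k,\mathbf{u}^*_k)$. Adding the two blocks yields $\Delta\le-\sum_{i=0}^m\alpha_*^i(\mathbf{x}_k)L^i(\mathbf{x}_k,\mathbf{u}^*_k)$, an inequality of the same form as \eqref{eq:difference}, which is $\le 0$ because $\alpha_*^i\ge 0$ and $L^i\ge 0$; this is \eqref{eq:2nd inequality 2}.

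The delicate point --- and the step I expect to be the main obstacle --- is the invocation of \eqref{eq:assumption} in the $i=0$ bound: Assumption~\ref{assumption 2} supplies the strict decrease only for states outside $B_\delta$, whereas the state appearing there is the primary-horizon terminal state $\mathbf{x}_{k,f}$, which need not lie outside $B_\delta$ when $\mathbf{x}_k\in\mathbb{X}\setminus\{B_\delta\}$ (this is precisely why Theorem~\ref{theorem1} restricts to $\hat{\mathbb{X}}(\boldsymbol{\alpha}_*(\mathbf{x}_k),\delta)$). I would resolve it in one of two ways: (i) appeal to Algorithm~\ref{al:backup}, where the case $\mathbf{x}_{k,f}\in B_\delta$ triggers the switch $\boldsymbol{\alpha}_*(\mathbf{x}_k)=[1,0,\dots,0]^\top$ (lines~10--11), so that the weight to which \eqref{eq:2nd inequality 2} is actually applied along closed-loop trajectories is trivial and the estimate reduces to the standard single-horizon MPC shift argument; or (ii) use $L^0(\mathbf{x}_k,\mathbf{u}^*_k)\ge\sigma_1(\|\mathbf{x}_k\|)\ge\sigma_1(\delta)$ from Assumption~\ref{assumption:1} together with continuity of $L^0,F^0,f$ and $L^0(\mathbf{0},\mathbf{0})=F^0(\mathbf{0})=0$ over the compact set $\overline{B_\delta}$ to dominate the residual primary bracket by $L^0(\mathbf{x}_k,\mathbf{u}^*_k)$ for a suitably chosen $\delta$. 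Beyond that, the proof is just the reindexing bookkeeping behind \eqref{eq:proof end}, specialized by the sign condition $P\le 0$.
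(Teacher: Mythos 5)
Your proof is correct and is essentially the paper's own argument: the paper likewise reuses the expansion \eqref{eq:proof end}, bounds each alternative-destination summand by $P\le 0$ via \eqref{eq:P}, and invokes \eqref{eq:assumption} for the primary block to recover \eqref{eq:difference} and hence \eqref{eq:2nd inequality 2}. The delicate point you flag is real but is also present in the paper's proof, which silently applies \eqref{eq:assumption} at $\mathbf{x}_{k,f}$ even though $\mathbf{x}_{k,f}$ is not guaranteed to lie outside $B_\delta$ when only $\mathbf{x}_k\in\mathbb{X}\setminus\{B_\delta\}$ is assumed; in the one place Theorem~\ref{theorem2} is actually used (the $P\le 0$ branch of Theorem~\ref{theorem3}), it is invoked only on $\mathbb{\hat{X}}(\boldsymbol{\alpha}_t(\mathbf{x}_k),\delta)$, where $\mathbf{x}_{k,f}\notin B_\delta$ by construction, which is exactly your resolution (i).
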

\begin{proof}
Since $P\le 0$, according to its definition in \eqref{eq:P}, it follows that 
\begin{align}
\label{eq:theorem 3}
   L^i(\mathbf{x},\hat{\mathbf{u}})+F^i(f(\mathbf{x},\hat{\mathbf{u}}))-F^i(\mathbf{x})\le P \le 0\;
  \forall \{i,\mathbf{x}\}\in\mathbb{I} \times \mathbb{X}.
\end{align}
By taking into account \eqref{eq:theorem 3}, \eqref{eq:assumption} from Assumption \ref{assumption 2}, and \eqref{eq:proof end}, it follows that \eqref{eq:difference} still holds. Since $L^i(\cdot) \ge 0$ and $\alpha_*^i(\cdot)\ge0 \; \forall i$, we can achieve \eqref{eq:2nd inequality 2} for any feasible design of $\boldsymbol{\alpha}_*(\mathbf{x}_k)$ from \eqref{eq:difference}.
\end{proof}
Next, we will use the result of Theorems \ref{theorem1} and \ref{theorem2} to prove the asymptotic stability of the closed-loop system for both $P\le 0$ and $P > 0$.
\begin{theorem}
\label{theorem3}
Let $L^i$ and $F^i, i = 0,1,\cdots,m$ satisfy Assumption~\ref{assumption:1}. Let Assumption \ref{assumption 2} hold. If the tuning parameters $\delta$, $\gamma^i$ ($i=1,2,\dots,m$), and $\mu$ satisfy conditions \eqref{eq:beta} and \eqref{eq:al 1}, $K$ satisfies \eqref{eq:assumption} and $\hat{\mathbf{u}}$ is defined in \eqref{eq:uhat}, then the system \eqref{eq:sysmodel} in a closed-loop with the backup plan constrained MPC algorithm defined in Algorithm \ref{al:backup} is asymptotically stable with respect to $\mathbf{p}^0$ for any initial condition $\mathbf{x}_0 \in \mathbb{X}$. 
\end{theorem}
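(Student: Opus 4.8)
The plan is to use the optimal overall cost realized along the closed loop, $V_k := \boldsymbol{\alpha}_*(\mathbf{x}_k)^\top\mathbf{J}(\mathbf{x}_k,\mathbf{U}_*(\mathbf{x}_k))$, as a Lyapunov function and then invoke the standard discrete-time Lyapunov theorem. It helps to split the evolution of Algorithm~\ref{al:backup} into two regimes: the \emph{backup-plan regime} (lines~5--17 with line~10 false), where $\boldsymbol{\alpha}_*(\mathbf{x}_k)=\boldsymbol{\alpha}_t(\mathbf{x}_k)$ is a baseline-type weight and, by construction, $\mathbf{x}_k\notin B_\delta$ and $\mathbf{x}_{k,f}\notin B_\delta$, i.e.\ $\mathbf{x}_k\in\hat{\mathbb{X}}(\boldsymbol{\alpha}_*(\mathbf{x}_k),\delta)$; and the \emph{primary-only regime}, where $\boldsymbol{\alpha}_*(\mathbf{x}_k)=[1,0,\dots,0]^\top$, which once entered (line~2 or line~10) persists by lines~18--19 and makes $V_k=J^0(\mathbf{x}_k,\mathbf{U}_*^0(\mathbf{x}_k))$ the scalar standard-MPC cost.

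First I would establish the two sandwich bounds. For the lower bound, every entry of $\mathbf{J}$ is nonnegative, the first stage cost in $J^0$ is $L^0(\mathbf{x}_k,\cdot)\ge\sigma_1(\|\mathbf{x}_k-\mathbf{p}^0\|)=\sigma_1(\|\mathbf{x}_k\|)$ by Assumption~\ref{assumption:1}, and $\alpha_*^0(\mathbf{x}_k)\ge\beta$ in the backup-plan regime (resp.\ $=1$ in the primary-only regime), so $V_k\ge\beta\,\sigma_1(\|\mathbf{x}_k\|)$, a $\mathcal{K}$-function of $\|\mathbf{x}_k\|$. For the upper bound near the origin I would, for $\mathbf{x}_k\in B_\delta$ (so $\boldsymbol{\alpha}_*(\mathbf{x}_k)=[1,0,\dots,0]^\top$), feed a feasible candidate input (e.g.\ the repeated linear feedback of Assumption~\ref{assumption 2}) into $J^0$ and use continuity of $f,L^0,F^0$, $L^0(\mathbf{0},\mathbf{0})=F^0(\mathbf{0})=0$, and $F^0\le\sigma_2$ to get $V_k\le\bar\alpha(\|\mathbf{x}_k\|)$ for a $\mathcal{K}$-function $\bar\alpha$.

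The core is the descent estimate. In the backup-plan regime, with $\mathbf{x}_{k+1}$ also in that regime, I would chain: (i) feasibility of the shifted warm start $\mathbf{U}_s(\mathbf{x}_{k+1})$ in \eqref{eq:u_s} plus optimality give $V_{k+1}\le\boldsymbol{\alpha}_t(\mathbf{x}_{k+1})^\top\mathbf{J}(\mathbf{x}_{k+1},\mathbf{U}_s(\mathbf{x}_{k+1}))$; (ii) the definition \eqref{eq:alpha_t} of $\boldsymbol{\alpha}_t$ gives $\boldsymbol{\alpha}_t(\mathbf{x}_{k+1})^\top\mathbf{J}(\mathbf{x}_{k+1},\mathbf{U}_s(\mathbf{x}_{k+1}))\le\boldsymbol{\alpha}_*(\mathbf{x}_k)^\top\mathbf{J}(\mathbf{x}_{k+1},\mathbf{U}_s(\mathbf{x}_{k+1}))$; (iii) Theorem~\ref{theorem1} if $P>0$ (noting its proof uses only $\alpha_*^0\ge\beta$ and $\sum_{i\ge1}\alpha_*^i\le1-\beta$, which hold for any baseline weight evaluated on $\mathbb{X}\setminus B_\delta$, hence also when $\boldsymbol{\alpha}_t(\mathbf{x}_k)=\boldsymbol{\alpha}_*(\mathbf{x}_{k-1})$) or Theorem~\ref{theorem2} if $P\le0$, in the sharp form \eqref{eq:difference}, yields $V_{k+1}\le V_k-\sum_{i=0}^m\alpha_*^i(\mathbf{x}_k)L^i(\mathbf{x}_k,\mathbf{u}^*_k)\le V_k-\beta\,\sigma_1(\|\mathbf{x}_k\|)$. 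In the primary-only regime the same warm start together with Assumption~\ref{assumption 2} (or Theorem~\ref{theorem2} when $P\le0$) gives $V_{k+1}\le V_k-\sigma_1(\|\mathbf{x}_k\|)$ for $\mathbf{x}_k\notin B_\delta$, which is the stabilizing standard MPC of \cite{lazar2006stabilizing}. Since $V_k\ge0$ and it strictly decreases by at least $\beta\,\sigma_1(\|\mathbf{x}_k\|)$ as long as the backup-plan regime holds, $\|\mathbf{x}_k\|$ cannot remain bounded away from the origin in that regime; hence the primary-only regime is entered in finite time, after which convergence follows from the standard-MPC argument. Combining this with the two sandwich bounds yields asymptotic stability w.r.t.\ $\mathbf{p}^0$ for every $\mathbf{x}_0\in\mathbb{X}$.

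The main obstacle is the regime switch: $V_k$ changes form there (weighted vector cost $\to$ scalar primary cost), so one cannot simply chain $V_{k+1}\le V_k$ across the step where line~2 or line~10 first fires. I would handle it by treating the switch as a finite-time reachability event rather than as a step of a single monotone Lyapunov function—showing that it occurs, and that immediately afterward the state (or its $N$-step optimal prediction toward $\mathbf{p}^0$) is trapped in a sublevel set of $J^0$ small enough that the lower and upper bounds above confine the trajectory and then drive it to the origin. A secondary point needing care is recursive feasibility of $\mathbf{U}_s$: appending $K\mathbf{x}_{k,f}$ (resp.\ $\hat{\mathbf{u}}$) must keep the control in $\mathbb{U}$ and the propagated state in $\mathbb{X}$, which is an implicit terminal-invariance-type condition bundled with Assumption~\ref{assumption 2}.
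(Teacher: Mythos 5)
Your proposal is correct and follows essentially the same route as the paper's proof: the same value function $V$, the same two-phase decomposition via $\hat{\mathbb{X}}(\boldsymbol{\alpha}_t(\mathbf{x}_k),\delta)$, the same chain of inequalities (warm-start optimality, the definition of $\boldsymbol{\alpha}_t$ in \eqref{eq:alpha_t}, and the descent bound \eqref{eq:difference} from Theorems~\ref{theorem1}--\ref{theorem2}) to force entry into the primary-only phase in finite time, and the same reduction to standard stabilizing MPC thereafter. Your two flagged caveats---that monotonicity of $V$ cannot be chained across the regime switch, and that recursive feasibility of $\mathbf{U}_s$ is an implicit assumption---are handled (respectively, glossed over) in the paper exactly as you anticipate.
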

Before stating the proof, we present the following definitions.
Based on the definition in \eqref{eq:xhat}, we have $\mathbb{\hat{X}}(\boldsymbol{\alpha}_t(\mathbf{x}_k),\delta) = \{\mathbf{x}_k \mid \mathbf{x}_k \in\mathbb{X}\setminus \{B_\delta\} \;\&\; \mathbf{x}_{k,f}\in \mathbb{X}\setminus \{B_\delta\}\}$, where $\boldsymbol{\alpha}_t(\mathbf{x}_k)$ is defined in \eqref{eq:alpha_t}, $\mathbf{x}_{k,f}$ is the resulting state from simulating $\mathbf{U}^0(\mathbf{x}_k,\boldsymbol{\alpha}_t(\mathbf{x}_k))$ on system \eqref{eq:sysmodel} at $\mathbf{x}_k$ with $\mathbf{U}^0(\mathbf{x}_k,\boldsymbol{\alpha}_t(\mathbf{x}_k))$ from $\mathbf{U}(\mathbf{x}_k,\boldsymbol{\alpha}_t(\mathbf{x}_k))$, which is the solution to the feasibility maximization problem \eqref{eq:mmpc} for $\mathbf{x}=\mathbf{x}_k$ and $\boldsymbol{\alpha}=\boldsymbol{\alpha}_t(\mathbf{x}_k)$. Given any initial condition $\mathbf{x}_0$, let $\mathbf{x}_k$ represent the state of the system following the backup plan constrained MPC algorithm at time $k$. We denote the period while $\mathbf{x}_k\in \mathbb{\hat{X}}(\boldsymbol{\alpha}_t(\mathbf{x}_k),\delta)$ as phase 1 and denote the period after the system state leaves $\mathbb{\hat{X}}(\boldsymbol{\alpha}_t(\mathbf{x}_k),\delta)$ for the first time as phase 2. Thus, according to the backup plan constrained MPC algorithm defined in Algorithm \ref{al:backup}, during phase 1 with $\mathbf{x}_k\in\mathbb{\hat{X}}(\boldsymbol{\alpha}_t(\mathbf{x}_k),\delta)$, we have $\boldsymbol{\alpha}_*(\mathbf{x}_k)=\boldsymbol{\alpha}_t(\mathbf{x}_k)$; during phase 2, we have $\boldsymbol{\alpha}_*(x)=[1,0,\dots,0]^\top \;\forall \mathbf{x}\in \mathbb{X}$. Given any state $\mathbf{x}_k$ and $\boldsymbol{\alpha}_*(\mathbf{x}_k)$ determined by the backup plan constrained MPC algorithm, the value function $V(\mathbf{x}_k)$ is defined as:
\begin{align}
\label{eq:V}
    V(\mathbf{x}_k)={\boldsymbol{\alpha}_*(\mathbf{x}_k)}^\top\mathbf{J}(\mathbf{x}_k,\mathbf{U}_*(\mathbf{x}_k)),
\end{align}
which indicates the overall cost of applying the optimal control input sequence $\mathbf{U}_*(\mathbf{x}_k)$ at state $\mathbf{x}_k$ with the weight vector $\boldsymbol{\alpha}_*(\mathbf{x}_k)$. As we previously explained in Section \ref{sec:intro}, a non-increasing overall cost can lead to the stability of the closed-loop system. For the backup plan constrained control problems, we will show that with $\boldsymbol{\alpha}_*(\mathbf{x}_k)$ determined by the backup plan constrained MPC algorithm, the value function is also non-increasing.

\begin{proof}
With Assumptions \ref{assumption:1} and \ref{assumption 2} satisfied, the existence of parameters $\delta$, $\gamma^i$ ($i=1,2,\dots,m$), and $\mu$ satisfying \eqref{eq:beta} and \eqref{eq:al 1} is guaranteed by Lemma \ref{lemma:existence}.
Here, we prove the asymptotic stability of the closed-loop system with respect to the origin for any initial condition $\mathbf{x}_0\in \mathbb{X}$ by showing: (I) the Lyapunov stability of the closed-loop system and (II) the convergence of the closed-loop system state to the origin for any initial condition $\mathbf{x}_0\in\mathbb{X}$. The convergence in (II) is further proven by (II.i) for any initial condition $\mathbf{x}_0\in \mathbb{X}$ the system enters phase 2 in finite time, and (II.ii) once the system is in phase 2 the closed-loop system state converges to origin.
We will first show (II.i), then (II.ii), and finally (I).

(II.i) When $\mathbf{x}_0\in\mathbb{X}\setminus \mathbb{\hat{X}}(\boldsymbol{\alpha}_t(\mathbf{x}_k),\delta)$, the system starts in phase 2 directly, i.e., the system enters phase 2 in finite time. To prove that the system enters phase 2 in finite time when $\mathbf{x}_0\in\mathbb{\hat{X}}(\boldsymbol{\alpha}_t(\mathbf{x}_k),\delta)$, we will show a non-increasing value function $V$ along the system trajectory following the backup plan constrained algorithm for both $P>0$ and $P\le 0$.

First, consider $P > 0$.
According to the design of $\boldsymbol{\alpha}_t(x)$ in \eqref{eq:alpha_t}, and since $\boldsymbol{\alpha}_*(\mathbf{x}_k)=\boldsymbol{\alpha}_t(\mathbf{x}_k)$ while $\mathbf{x}_k \in \mathbb{\hat{X}}(\boldsymbol{\alpha}_t(\mathbf{x}_k),\delta)$, we have
\begin{align}
\label{eq:first inequality}
    &\boldsymbol{\alpha}_*(\mathbf{x}_{k})^\top \mathbf{J}(\mathbf{x}_{k},\mathbf{U}_s(\mathbf{x}_{k})) 
    \le \boldsymbol{\alpha}_*(\mathbf{x}_{k-1})^\top \mathbf{J}(\mathbf{x}_{k},\mathbf{U}_s(\mathbf{x}_{k})), \text{while } \mathbf{x}_k \in \mathbb{\hat{X}}(\boldsymbol{\alpha}_t(\mathbf{x}_k),\delta) \text{ and system is in phase 1}.
\end{align}
According to the backup plan constrained MPC algorithm and the definition of $\boldsymbol{\alpha}_t(x)$ in \eqref{eq:alpha_t}, for $\mathbf{x}_k\in \mathbb{X}\setminus\{B_\delta\}$, $\alpha_t^0(\mathbf{x}_k)$ is either $\alpha_b^0(\mathbf{x}_k)$ or $\alpha_*^0(\mathbf{x}_{k-1})$, and $\alpha_*^0(\mathbf{x}_{k-1})$ is either $\alpha_t^0(\mathbf{x}_{k-1})$ or $1$. Recursively, $\alpha_t^0(\mathbf{x}_{k-1})$ is either $\alpha_b^0(\mathbf{x}_{k-1})$ or $\alpha_*^0(\mathbf{x}_{k-2})$.
It follows that for $\mathbf{x}_k\in \mathbb{X}\setminus\{B_\delta\}$, the range of $\alpha_t^0(\mathbf{x}_k)$ is the union of set $\{1\}$ and a subset of the range of $\alpha_b^0(\mathbf{x}_k)$. Thus, we have 
\begin{align}
\label{eq:alpha_tandb}
    \min_{\mathbf{x}_k\in\mathbb{X}\setminus\{B_\delta\}} \alpha_t^0(\mathbf{x}_k) \ge \min_{\mathbf{x}_k\in\mathbb{X}\setminus\{B_\delta\}} \alpha_b^0(\mathbf{x}_k).
\end{align}
Similar to \eqref{eq:beta}, let us define
\begin{align}
\label{eq:beta_s}
    \beta_t = \min_{\mathbf{x}_k\in\mathbb{X}\setminus\{B_\delta\}} \alpha_t^0(\mathbf{x}_k).
\end{align}
Considering \eqref{eq:beta_s}, \eqref{eq:alpha_tandb}, and \eqref{eq:beta}, we achieve
\begin{align}
\label{eq:2 beta}
\beta_t= \min_{\mathbf{x}_k\in\mathbb{X}\setminus\{B_\delta\}}\alpha_t^0(\mathbf{x}_k)\ge \min_{\mathbf{x}_k\in\mathbb{X}\setminus\{B_\delta\}}\alpha_b^0(\mathbf{x}_k)= \beta,
\end{align}
where $\beta$ is defined in \eqref{eq:beta} depending on tuning parameters $\delta, \gamma_i, \mu$.
Considering \eqref{eq:2 beta} and \eqref{eq:assumption} from Assumption \ref{assumption 2}, we have:
\begin{align}
\label{eq:theorem3_1}
    &\beta_t(L^0(\mathbf{x},K\mathbf{x})+F^0(f(\mathbf{x},K\mathbf{x}))-F^0(\mathbf{x})) \le \beta(L^0(\mathbf{x},K\mathbf{x})+F^0(f(\mathbf{x},K\mathbf{x}))-F^0(\mathbf{x})), \;\forall \mathbf{x} \in \mathbb{X}\setminus\{B_\delta\}.    
\end{align}
By taking into account \eqref{eq:2 beta} and the assumption that $P>0$, it follows that
\begin{equation}
\label{eq:theorem3_2}
    -(1-\beta)P \le -(1-\beta_t)P.
\end{equation}
Finally, we can relate the left hand side of \eqref{eq:theorem3_1} and the right hand side of \eqref{eq:theorem3_2} using \eqref{eq:al 1} and obtain:
\begin{align}
\label{eq:new_beta}
&\beta_t(L^0(\mathbf{x},K\mathbf{x})+F^0(f(\mathbf{x},K\mathbf{x}))-F^0(\mathbf{x}))\le-(1-\beta_t) P,\;\forall \mathbf{x} \in \mathbb{X}\setminus \{B_\delta\},
\end{align}
which is the $\beta_t$ version of \eqref{eq:al 1}.
Considering \eqref{eq:new_beta} and the fact that for $x \in \mathbb{\hat{X}}(\boldsymbol{\alpha}_t(\mathbf{x}_k), \delta)$ we have $\boldsymbol{\alpha}_*(x)=\boldsymbol{\alpha}_t(x)$, we can follow the same proof of Theorem \ref{theorem1} to prove \eqref{eq:2nd inequality} with the design of $\boldsymbol{\alpha}_*(\mathbf{x}_k)$ from Algorithm \ref{al:backup}, and obtain:
\begin{align}
\label{eq:new 2nd inequality}
    &\boldsymbol{\alpha}_*(\mathbf{x}_{k-1})^\top \mathbf{J}(\mathbf{x}_{k},\mathbf{U}_s(\mathbf{x}_{k})) \le \boldsymbol{\alpha}_*(\mathbf{x}_{k-1})^\top \mathbf{J}(\mathbf{x}_{k-1},\mathbf{U}_*(\mathbf{x}_{k-1})),
    \text{while } \mathbf{x}_{k-1} \in \mathbb{\hat{X}}(\boldsymbol{\alpha}_t(\mathbf{x}_k), \delta).
\end{align}
From optimality, we have
\begin{align}
\label{eq:v(k+1)}
    &V(\mathbf{x}_{k}) = \boldsymbol{\alpha}_*(\mathbf{x}_{k})^\top\mathbf{J}(\mathbf{x}_{k},\mathbf{U}_*(\mathbf{x}_{k}))  \le \boldsymbol{\alpha}_*(\mathbf{x}_{k})^\top\mathbf{J}(\mathbf{x}_{k},\mathbf{U}_s(\mathbf{x}_{k})).
\end{align}
According to Algorithm \ref{al:backup}, when the system at time $k$ is still in phase 1, we have $\mathbf{x}_{k-1} \in \mathbb{\hat{X}}(\boldsymbol{\alpha}_t(\mathbf{x}_k),\delta)$.
Following the inequalities \eqref{eq:v(k+1)}, \eqref{eq:first inequality}, and \eqref{eq:new 2nd inequality} in order, we can relate the left-hand side of \eqref{eq:v(k+1)} and right-hand side of \eqref{eq:new 2nd inequality}, and arrive at
\begin{align}
\label{eq:V decrease}
&V(\mathbf{x}_{k}) \le \boldsymbol{\alpha}_*(\mathbf{x}_{k})^\top \mathbf{J}(\mathbf{x}_{k},\mathbf{U}_s(\mathbf{x}_{k}))\le \boldsymbol{\alpha}_*(\mathbf{x}_{k-1})\mathbf{J}(\mathbf{x}_k,\mathbf{U}_s(\mathbf{x}_k))\nnum\\
&\le \boldsymbol{\alpha}_*(\mathbf{x}_{k-1})^\top \mathbf{J}(\mathbf{x}_{k-1},\mathbf{U}_*(\mathbf{x}_{k-1})) = V(\mathbf{x}_{k-1}),\nnum\\
&\text{while } \mathbf{x}_k \in \mathbb{\hat{X}}(\boldsymbol{\alpha}_t(\mathbf{x}_k),\delta) \text{ and system is in phase 1}.
\end{align}
Now consider $P\le 0$. As Theorem \ref{theorem3} and Theorem \ref{theorem2} share the same assumptions, Theorem \ref{theorem2} holds and we have \eqref{eq:2nd inequality 2}. According to \eqref{eq:xhat}, $\mathbb{\hat{X}}(\boldsymbol{\alpha}_t(\mathbf{x}_k),\delta)$ is a subset of $\mathbb{X}\setminus\{B_\delta\}$. Taking into account the previous result and by replacing $\mathbf{x}_k$ with $\mathbf{x}_{k-1}$ and $\mathbf{x}_{k+1}$ with $\mathbf{x}_k$ for  \eqref{eq:2nd inequality 2}, \eqref{eq:new 2nd inequality} holds. Since \eqref{eq:v(k+1)} and \eqref{eq:first inequality} are properties not related to $P$, they still hold when $P\le 0$. Following the inequalities \eqref{eq:v(k+1)}, \eqref{eq:first inequality}, and \eqref{eq:new 2nd inequality} in order, we are able to achieve \eqref{eq:V decrease}. Till now, we have shown \eqref{eq:V decrease}, a non-increasing value function $V$ while the system is in phase 1, for both $P>0$ and $P\le 0$.

Using \eqref{eq:V decrease}, we next show that with initial condition $\mathbf{x}_0\in\mathbb{\hat{X}}(\boldsymbol{\alpha}_t(\mathbf{x}_k), \delta)$, i.e. when the system starts in phase 1, the state of the closed-loop system will leave $\mathbb{\hat{X}}(\boldsymbol{\alpha}_t(\mathbf{x}_k), \delta)$ and enter $\mathbb{X}\setminus\mathbb{\hat{X}}(\boldsymbol{\alpha}_t(\mathbf{x}_k), \delta)$ in finite time. Assume by contradiction that the closed-loop system state stays in $\mathbb{\hat{X}}(\boldsymbol{\alpha}_t(\mathbf{x}_k), \delta)$ forever (the system stays in phase 1 forever), which indicates that both $\mathbf{x}_k$ and $\mathbf{x}_{k,f}$ never enter $B_\delta$ by the definition of $\mathbb{\hat{X}}(\boldsymbol{\alpha}_t(\mathbf{x}_k), \delta)$ in \eqref{eq:xhat}. Along the optimal trajectory, since the value function $V$ is non-increasing with respect to the time step k from \eqref{eq:V decrease} and $V$ is also lower bounded by 0 from its definition in \eqref{eq:V}, it follows that $V(\mathbf{x}_k)$ converges as $k\rightarrow \infty$, which implies $\lim_{k\rightarrow\infty}V(\mathbf{x}_k)-V(\mathbf{x}_{k-1})=0$. 
Due to \eqref{eq:V decrease} and the fact that $V(\mathbf{x}_k)$ converges, it follows that
\begin{align}
\label{eq:limit}
    \lim_{k\rightarrow \infty} \boldsymbol{\alpha}_*(\mathbf{x}_{k-1})^\top \mathbf{J}(\mathbf{x}_{k},\mathbf{U}_s(\mathbf{x}_{k}))-\boldsymbol{\alpha}_*(\mathbf{x}_{k-1})^\top \mathbf{J}(\mathbf{x}_{k-1},\mathbf{U}_*(\mathbf{x}_{k-1})) = 0.
\end{align}
Considering \eqref{eq:limit}, \eqref{eq:difference}, and the fact that $\alpha^i_*(\cdot)$ and $L^i(\cdot)$ are non-negative for all $i$, we have
$\lim_{k\rightarrow \infty}\sum_{i=0}^m \alpha_*^i(\mathbf{x}_k)L^i(\mathbf{x}_k,\mathbf{u}^*_k)= 0$.
From \eqref{eq:beta}, \eqref{eq:beta_s} and \eqref{eq:2 beta}, we have $\alpha_*^0(\mathbf{x}_k)\ge\beta_t \ge \beta> 0 \; \forall \mathbf{x}_k$ while the system is still in phase 1.
Since $\alpha_*^0(\mathbf{x}_k)\ge\beta>0$, $\alpha_*^i(\mathbf{x}_k)\geq 0$ and $L^0(\mathbf{x}_k,\cdot)\ge\sigma_1(\|\mathbf{x}_k\|)\; \forall \mathbf{x}_k$, it follows that $\lim_{k\rightarrow\infty}\sum_{i=0}^m \alpha_*^i(\mathbf{x}_k)L^i(\mathbf{x}_k,\mathbf{u}^*_k)= 0$ leads to $\lim_{k\rightarrow\infty}\mathbf{x}_k= 0$. From $\lim_{k\rightarrow\infty}\mathbf{x}_k= 0$, we know $\mathbf{x}_k$ will enter $B_\delta$ in finite time, and we reach a contradiction with the assumption that $\mathbf{x}_k\in\mathbb{\hat{X}}(\boldsymbol{\alpha}_t, \delta) \; \forall k>0$. As a result, we can guarantee that the state of the closed-loop system will leave $\mathbb{\hat{X}}(\boldsymbol{\alpha}_t, \delta)$ and enter $\mathbb{X}\setminus\mathbb{\hat{X}}(\boldsymbol{\alpha}_t, \delta)$ in finite time, i.e., the system enters phase 2 from phase 1 in finite time. Thus, we have (II.i).

(II.ii) According to the backup plan constrained MPC algorithm, when the system is in phase 2, it stays in phase 2 and we have $\boldsymbol{\alpha}_*(x)=[1,0,0,\dots,0]^\top \; \forall \mathbf{x}\in\mathbb{X}$. Thus, the feasibility maximization problem becomes a single objective MPC problem, and only the primary destination is considered in phase 2. With Assumptions \ref{assumption:1} and \ref{assumption 2} satisfied, Assumption III.1 in \cite{lazar2006stabilizing} is satisfied. When the closed-loop system stays in phase 2, we can leverage Theorem III.2 in \cite{lazar2006stabilizing} to prove that the origin of the closed-loop system is asymptotically stable in the Lyapunov sense. Thus, when the system is in phase 2, the closed-loop system state converges to the origin. 

(I) Leveraging Theorem III.2 in \cite{lazar2006stabilizing}, when the closed-loop is in phase 2, we have the following three properties:
\begin{align}
\label{eq:3 properties}
    &V(\mathbf{x}_k) \ge V(\mathbf{x}_{k+1}) \nnum\\
    & V(\mathbf{x}_k) \ge \sigma_l(\|\mathbf{x}_k\|) \nnum\\
    & V(\mathbf{x}_k) \le \sigma_u(\|\mathbf{x}_k\|),
\end{align}
where $\sigma_l$ and $\sigma_u$ are two class $\mathcal{K}$-functions.
Regarding the tuning parameter $\delta$, for any $\epsilon\in(0,\delta)$, we can always find a $\delta'\in (0,\epsilon)$ s.t. $\sigma_u(\delta') < \sigma_l(\epsilon)$. According to Algorithm \ref{al:backup}, for any initial condition $\|\mathbf{x}_0\|< \delta'< \delta$, the system is already in phase 2. Based on \eqref{eq:3 properties}, we achieve $\dots \le V(\mathbf{x}_{k+1})\le V(\mathbf{x}_k)\le\dots\le V(\mathbf{x}_0)\le \sigma_u(\|\mathbf{x}_0\|)<\sigma_u(\delta')<\sigma_l(\epsilon)$, which implies $\mathbf{x}_k\in B_\epsilon \; \forall k\ge0$. For any $\epsilon' > \epsilon$, we can keep picking the same $\delta'$ (which makes the system stay in phase 2) and achieve $\|\mathbf{x}_0\|<\delta' \rightarrow \|\mathbf{x}_k\|<\epsilon < \epsilon'$.
Thus, $\forall \epsilon >0\; \exists \delta' \text{ s.t. } \|\mathbf{x}_0\|< \delta' \rightarrow \|\mathbf{x}_k\| < \epsilon \; \forall k>0$, and we have the Lyapunov stability of the closed-loop system with the backup plan constrained MPC controller defined in Algorithm \ref{al:backup}. 

From the results of (II.i) and (II.ii), it follows that for any initial condition $\mathbf{x}_0\in\mathbb{X}$, we have $\lim_{k\rightarrow\infty}\mathbf{x}_k=0$. With $\lim_{k\rightarrow\infty}\mathbf{x}_k=0 \; \forall \mathbf{x}_0\in\mathbb{X}$ and (I), we prove that the system in the closed loop, with the backup plan constrained MPC algorithm defined in Algorithm \ref{al:backup}, is asymptotically stable with $\mathbb{X}$ as a region of attraction.
\end{proof}

Since we have misaligned objectives in the backup plan constrained control problems, the feedback control $K\mathbf{x}$ used towards the decrease of the cost of the primary destination may result in the increase of the cost toward alternative destnations. Thus, \eqref{eq:assumption} is assumed to hold only for the primary destination. Since $\hat{\mathbf{u}}$ in \eqref{eq:uhat} results in the smallest value of $\max_{\{i,\mathbf{x}\}\in \mathbb{I}\times\mathbb{X}}L^i(\mathbf{x},\mathbf{u})+F^i(f(\mathbf{x},\mathbf{u}))-F^i(\mathbf{x})$ over all feasible control inputs, we add $\hat{\mathbf{u}}$ instead of $K\mathbf{x}$ to $\mathbf{U}_s^i(\mathbf{x}_k)$ in \eqref{eq:u_s} for $i\neq 0$ to achieve a smaller upper bound for the cost increases of the alternative destinations.
The problem preventing us from getting the direct asymptotic stability of the closed-loop system without checking if $\mathbf{x} \in \mathbb{\hat{X}}(\boldsymbol{\alpha}_t(\mathbf{x}_k),\delta)$ or the phase of the system is the condition in \eqref{eq:al 1} when $P >0$. According to the definition of $\beta$ in \eqref{eq:beta} and since $\gamma_i$ and $\delta$ are positive parameters, we have $0<\beta\le1$. Since $\beta=1$ is undesirable as we previously explained in Remark \ref{remark 1}, to ensure the existence of $0<\beta < 1$ for $\beta(L^0(\mathbf{x},K\mathbf{x})+F^0(f(\mathbf{x},K\mathbf{x}))-F^0(\mathbf{x})) \le-(1-\beta) P$, \eqref{eq:al 1} must only hold for inputs outside a ball $B_\delta$ around the origin. Since the state we plug into \eqref{eq:al 1} is actually $\mathbf{x}_{k,f}$, and because \eqref{eq:al 1} is necessary for the non-increase of the value function $V$, we can only achieve the non-increase of value function $V$ for $\mathbf{x}_k\notin B_\delta \text{ s.t. } \mathbf{x}_{k,f}\notin B_\delta$, i.e., $\mathbf{x}_k\in\mathbb{\hat{X}}(\boldsymbol{\alpha}_t(\mathbf{x}_k),\delta)$ and the system is in phase 1. After $\mathbf{x}_k$ leaves $\mathbb{\hat{X}}(\boldsymbol{\alpha}_t(\mathbf{x}_k),\delta)$ and when the system is in phase 2, to ensure that the value function does not increase, we change the design of the weight vector to $\boldsymbol{\alpha}_*(x)=[1,0,0,\dots,0]^\top \;\forall \mathbf{x}$. We can thus guarantee the non-increase of the value function $V$ along the entire system trajectory.
In addition to the backup plan constrained control problems, this proof can be generalized to all multi-objective MPC problems with misaligned objectives.

\section{Multi-horizon Multi-objective Model Predictive Path Integral Control (3M)}\label{sec:3M}

The current section proposes the 3M solver, which is used in the backup plan constrained MPC algorithm to solve the feasibility maximization problem~\eqref{eq:mmpc} efficiently given a design of the weight vector.
The proposed solver is based on MPPI, which is sampling-based and suitable for parallel computation.

\subsection{Model Predictive Path Integral Control (MPPI)}\label{sec:MPPI}

Through parallel computation, MPPI solves the stochastic optimal control problem by sampling the system trajectories  \citep{williams2016aggressive,williams2017model, williams2018information}.
Due to the sampling nature of MPPI, it is not necessary to obtain the derivatives of either the dynamics or the cost function of the system. Thus, we are able to solve the optimal control problem for nonlinear dynamics and non-smooth/non-differentiable cost functions without approximations. MPPI is capable of real-time application even for relatively large dimensions of the state space (e.g., there are 48 state variables for the 3-quadrotor control example in \citep{williams2017model}) using GPUs. As a result, MPPI has been used widely to handle real-time control problems due to its computational efficiency.

For dynamical system~\eqref{eq:sysmodel}, consider a noise-corrupted input:
$
    \mathbf{x}_{k+1}= f(\mathbf{x}_k,\mathbf{u}_{k}+\epsilon_k),
$
where $\epsilon_k \in \RR^{n_u}$ is an independent and identically distributed (i.i.d.) Gaussian noise, i.e., $\epsilon_k \sim \mathcal{N}(0, \Sigma)$ with the known co-variance matrix $\Sigma$.
The optimization problem aims to determine an input sequence $\mathbf{U}^0= [\mathbf{u}_0^\top, \mathbf{u}_1^\top, \cdots, \mathbf{u}_{N-1}^\top]^\top$ with finite time horizon N that minimizes the expected cost
$
        \mathbf{U}^* = \argmin_{\mathbf{U}^0} \mathbb{E} [S(\tau)],
$
where $\tau = \{\mathbf{x}_0, \mathbf{u}_0, \mathbf{x}_1, \mathbf{u}_1,\cdots, \mathbf{u}_{N-1}, \mathbf{x}_N\}$ is the state-input set. The cost of a trajectory can be calculated by:
\begin{align}\label{eq:costMPPI}
    S(\tau) = \phi(\mathbf{x}_N) + \sum_{t=0}^{N-1} \left( c(\mathbf{x}_t) + \lambda \mathbf{u}_t^\top\Sigma^{-1}\epsilon_t  \right),
\end{align}
where $\phi(\mathbf{x}_N)$ is the terminal cost, $c(\mathbf{x}_t)$ is the state-dependent running cost, and $\lambda$ is a tuning parameter. The random noise $\epsilon_k$ in the input for $k=0,\cdots, N-1$ affects the state-input set $\tau$ and thus influences the cost function \eqref{eq:costMPPI}. We sample $\epsilon_k$ from the distribution $\mathcal{N}(0, \Sigma)$, and construct $M$ noise-corrupted trajectories $\boldsymbol\epsilon^q = [(\epsilon_0^q)^\top,(\epsilon_1^q)^\top,\cdots,(\epsilon_{N-1}^q)^\top]$ for $q=1,\cdots,M$, which are then evaluated with the cost function $S$ from \eqref{eq:costMPPI}.
Furthermore, the iterative update law from \citep{williams2017model} is leveraged by MPPI to calculate the current optimal input sequence $\mathbf{U}_k^*$ based on the previous optimal input sequence $\mathbf{U}_{k-1}^*$ as follows: $\mathbf{U}_k^* = [\mathbf{u}_{1,k-1}^\top,\mathbf{u}_{2,k-1}^\top,\cdots,\mathbf{u}_{N-1,k-1}^\top,{u'}^\top]^\top +\sum_{q=1}^Mw^q{\boldsymbol\epsilon}^q,$
where
$
    w^q=\frac{exp(-\frac{1}{\lambda }S(\tau^q))}{\sum_{q=1}^M exp(-\frac{1}{\lambda}S(\tau^q))}
$
with the temperature parameter of the Gibbs distribution $\lambda$, and the set of state-input for the $q^{th}$ noise trajectory $\tau^q$. Regarding $\mathbf{U}_{k}^*$, $\mathbf{u}_{i,k-1}$ is the $i^{th}$ input of $\mathbf{U}_{k-1}^*$ and $u'$ is a custom control input.

\begin{algorithm}[t]
\caption{3M Solver}\label{al:3}
\textbf{Input to the 3M Solver:}\\
$\mathbf{x}_k$: Current state; \\
$\alpha(\mathbf{x}_k)$: Design of the weight vector;\\
$\mathbf{U}_{s}(\mathbf{x}_k)$: Predefined control input sequence;\\
\textbf{Choose tuning parameters:}\\
$M$: Number of sample trajectories;\\
$\Sigma$: Co-variance of the noise $\epsilon_k$;\\
$\lambda$: Temperature parameter of the Gibbs distribution;\\

\begin{algorithmic}[1]
\STATE Sample $M$ trajectories of noise $\boldsymbol\epsilon^q$ as in~\eqref{eq:Ksamplenoise};
\STATE Simulate $\mathbf{U}_s(\mathbf{x}_k)+\boldsymbol\epsilon^q$ on the system~\eqref{eq:sysmodel} to get
$\mathbf{X}^q(\mathbf{x}_k)$ for $q=1,\cdots,M$;
\STATE Evaluate $\alpha(\mathbf{x}_k)^\top \mathbf{J}^q$ for $q=1,\cdots,M$, where $\mathbf{J}^q = \mathbf{J}(\mathbf{x}_k,\mathbf{U}_s(\mathbf{x}_k)+\boldsymbol\epsilon^q)$;
\STATE Calculate estimated output control $\mathbf{U}(\mathbf{x}_k,\alpha(\mathbf{x}_k))$ in~\eqref{eq:3m_input} using the costs of the $M$ trajectories,
$(\alpha(\mathbf{x}_k)^\top\mathbf{J}^1,\cdots,\alpha(\mathbf{x}_k)^\top\mathbf{J}^M)$ and the weights $\boldsymbol w^q$.
\end{algorithmic}
\textbf{Return: $\mathbf{U}(\mathbf{x}_k,\alpha(\mathbf{x}_k))$}
\end{algorithm}

\subsection{Multi-horizon Multi-objective Model Predictive Path Integral Control (3M) Solver}

In this section, we propose the multi-horizon multi-objective model predictive path integral control (3M) solver through the application of MPPI control to the feasibility maximization problem \eqref{eq:mmpc} given a design of the weight vector. The proposed 3M solver is summarized in Algorithm \ref{al:3} and explained below.

The input to the 3M solver includes the current state $\mathbf{x}_k$, the design of weight vector $\alpha(\mathbf{x}_k)$ from the backup plan constrained MPC algorithm, and $\mathbf{U}_s(\mathbf{x}_k)$, a control input sequence generated from the previous optimal control input sequence $\mathbf{U}_*(\mathbf{x}_{k-1})$.

We first sample $M$ trajectories of noises as follows (line 1):
\begin{align} \label{eq:Ksamplenoise}
    \boldsymbol\epsilon^q \in \RR^{(N+\frac{N(N-1)}{2}m)n_u}, \quad
    \epsilon^q(i) \sim {\mathcal{N}}(0,\Sigma)
\end{align}
for $q=1,\cdots,M$ and $i=1,\cdots,N+\frac{N(N-1)}{2}m$, where $\epsilon^q(i) \in \RR^{n_u}$ is the $i^{th}$ noise vector of $\boldsymbol\epsilon^q$. Then, we construct noise disturbed input $\mathbf{U}_s(\mathbf{x}_k)+\boldsymbol\epsilon^q$ for each sampled trajectory with noise, and evaluate the corresponding cost as (lines 2 and 3):
\begin{align*}
    \alpha(\mathbf{x}_k)^\top \mathbf{J}^q = \alpha(\mathbf{x}_k)^\top \mathbf{J}(\mathbf{x}_k,\mathbf{U}_s(\mathbf{x}_k)+\boldsymbol\epsilon^q) \quad \text{for $q=1,\cdots,M$},
\end{align*}
and we denote $\mathbf{X}^q(\mathbf{x}_k)$ as the simulated state trajectory with the noise disturbed input $\mathbf{U}_s(\mathbf{x}_k)+\boldsymbol\epsilon^q$ on the system~\eqref{eq:sysmodel}. Given the input weight vector $\alpha(\mathbf{x}_k)$ and based on the costs of $M$ noise-corrupted trajectories $(\alpha(\mathbf{x}_k)^\top\mathbf{J}^1,\cdots,\alpha(\mathbf{x}_k)^\top\mathbf{J}^M)$, we calculate the weights as $
    \boldsymbol w^q=\frac{exp(-\frac{1}{\lambda}\alpha(\mathbf{x}_k)^\top\mathbf{J}^q)
}{\sum_{q=1}^M exp(-\frac{1}{\lambda}\alpha(\mathbf{x}_k)^\top\mathbf{J}^q)} \in \RR_{>0}
$
for $q=1,\cdots,M$, and finally obtain the optimal control input sequence as (line 4):
\begin{align}\label{eq:3m_input}
    \mathbf{U}(\mathbf{x}_k,\alpha(\mathbf{x}_k))=\mathbf{U}_s(\mathbf{x}_k)
+\sum_{q=1}^M \boldsymbol w^q{\boldsymbol\epsilon}^q.
\end{align}

\section{Illustrative Examples}\label{sec:sim}

In this section, we tested the proposed algorithm in simulations to address the motion planning problem of the unmanned aerial vehicle (UAV) with two different dynamical systems. In both experiments, the control objective for the UAV is to arrive at the primary destination $[0,0]$  from the initial position $[5,9]$ in the $2$-D plane. The UAV simulations here can be a real-world representation of an urban drone delivery system or an air taxi transportation system, where the aerial vehicle flies $300$ ft above ground and delivers a package or passenger to the primary destination. Safe rooftops or open areas with no traffic or people can be selected as alternative destinations for an emergency landing in case of primary destination abortion. For computational efficiency, we use GPUs for parallel computation and also discuss applicable control frequency in the real world in terms of calculation time for solving the feasibility maximization \eqref{eq:mmpc}.

\subsection{Simulation Results} \label{sec:simulation}
We first evaluate the performance of the proposed controller on the following dynamical model:
\begin{align}
\label{eq:model2}
    \mathbf{x}_{k+1}=
    \left[
    \begin{array}{cccc}
     1 & 0 & 0.1 & 0   \\
     0 & 1 &  0 & 0.1 \\
     0 & 0 &  1 & 0   \\
     0 & 0 &  0 & 1
    \end{array}
    \right] \mathbf{x}_k +
    \left[
    \begin{array}{cccc}
     0 & 0    \\
     0 & 0    \\
     1 & 0    \\
     0 & 1
    \end{array}
    \right]\mathbf{u}_{k} , 
\end{align}
where $\mathbf{x}_k \in \RR^4$ represents the horizontal coordinate, vertical coordinate, horizontal velocity, and vertical velocity, and the input $\mathbf{u}_{k} \in \RR^2$ consists of horizontal and vertical accelerations. The initial condition is $\mathbf{x}_0 = [5,9,0,0]^\top$, and the primary destination is $\mathbf{p}^0 = [0,0,0,0]^\top$. To satisfy  Assumption \ref{assumption 2} and constraints \eqref{eq:beta} and \eqref{eq:al 1}, one needs to set the domains $\mathbb{X}$ and $\mathbb{U}$. The state constraint is set as $\mathbf{x}_k=[x_1,x_2,x_3,x_4]^\top$, where $x_1 \in [-2,10]$, $x_2\in[-2,10]$, $x_3\in[-10,2]$ and $x_4 \in [-10,2]$, and the input constraint is $\mathbf{u}_{k}=[u_1,u_2]^\top$, where $u_1\in[-10,2]$ and $u_2\in[-10,2]$. Euclidean distance has been used for the distance metric $d$ in~\eqref{eq:MissionCompletetion}. The cost function in~\eqref{eq:costSmallJ} is constructed as $L^i(\mathbf{x},\mathbf{u})= (x-\mathbf{p}^i)^\top Q^1 (x-\mathbf{p}^i)+u^\top R u$, $F^i(\mathbf{x}) = (x-\mathbf{p}^i)^\top Q^2 (x-\mathbf{p}^i)$ with $Q^1=10^{-5}*\mathbb{I}$, $Q^2=0.1*\mathbb{I}$ and $R=0.1*\mathbb{I}$, where $\mathbb{I}$ is the identity matrix with a proper dimension. The matrix $K$ is chosen such that the pole of the system $A+BK$ is placed at $[0.95;0.9;0.95;0.9]$, and also assumption \eqref{eq:assumption} is satisfied with $\delta=2$. The MPPI control parameters are $\Sigma={\mathbb{I}}$, $\lambda = 1$, and $M=10000$. The control horizon is set as $N=10$. Two simulations have been conducted with two different alternative destinations ($m=2$). For the first simulation, alternative destinations are given by $\mathbf{p}^1=[4,9,0,0]^\top$ and $\mathbf{p}^2 = [1,4,0,0]^\top$ with calculated $P=0.56$. 
In order to satisfy \eqref{eq:al 1}, we need $\beta > 0.65$. For the second simulation, $\mathbf{p}^1=[4,6,0,0]^\top$ and $\mathbf{p}^2 = [3,1,0,0]^\top$ with calculated $P=0.40$, and we also need $\beta > 0.56$ to satisfy \eqref{eq:al 1}. The tuning parameters $\gamma_i$ and $\mu$ of the backup plan constrained MPC algorithm are chosen differently for each simulation to satisfy \eqref{eq:beta} and \eqref{eq:al 1}, and those parameters are listed in each figure. The simulation results are presented in Fig. \ref{fig:sim2}, with the top two figures showing the generated trajectory and time history of the weight vector of the first simulation and the bottom two figures showing the results of the second simulation. For the trajectory plot, the primary destination is marked by a black dot, alternative destination 1 by a red dot, alternative destination 2 by a green dot, and the initial position by a blue dot. 
Based on Fig. \ref{fig:sim2}, the UAV with the backup plan constrained MPC algorithm makes a detour to the primary destination in both cases, flying near alternative destinations. The detour trajectory is safer in the backup plan sense, providing a shorter path toward one of the alternative destinations when an emergency landing is in need. The value of $\boldsymbol{\alpha}_*(\mathbf{x}_k)$ determined by the backup plan constrained MPC algorithm is also presented in Fig. \ref{fig:sim2}. We see that $\alpha_*^i(\mathbf{x}_k)$ changes dynamically based on the current state $\mathbf{x}_k$, and $\alpha_*^0(\mathbf{x}_k)$ does not decrease as the system state gets closer to the primary destination. After $\mathbf{x}_k$ enters $\mathbb{\hat{X}}(\boldsymbol{\alpha}_t(\mathbf{x}_k), \delta)$ and the system is in phase 2, we see that $\boldsymbol{\alpha}_*(\mathbf{x}_k)$ becomes $[1,0,0]^T$ as expected. In addition, along the system trajectory, $\boldsymbol{\alpha}_*(\mathbf{x}_k)$ stays constant for a period of time before entering phase 2, which is caused by our design of $\boldsymbol{\alpha}_t(\mathbf{x}_k)$ in \eqref{eq:alpha_t} as we may have $\boldsymbol{\alpha}_t(\mathbf{x}_k)=\boldsymbol{\alpha}_*(\mathbf{x}_{k-1})$ during these times.
\begin{figure}[!t]
 \centering
	\includegraphics[width=0.8\linewidth]{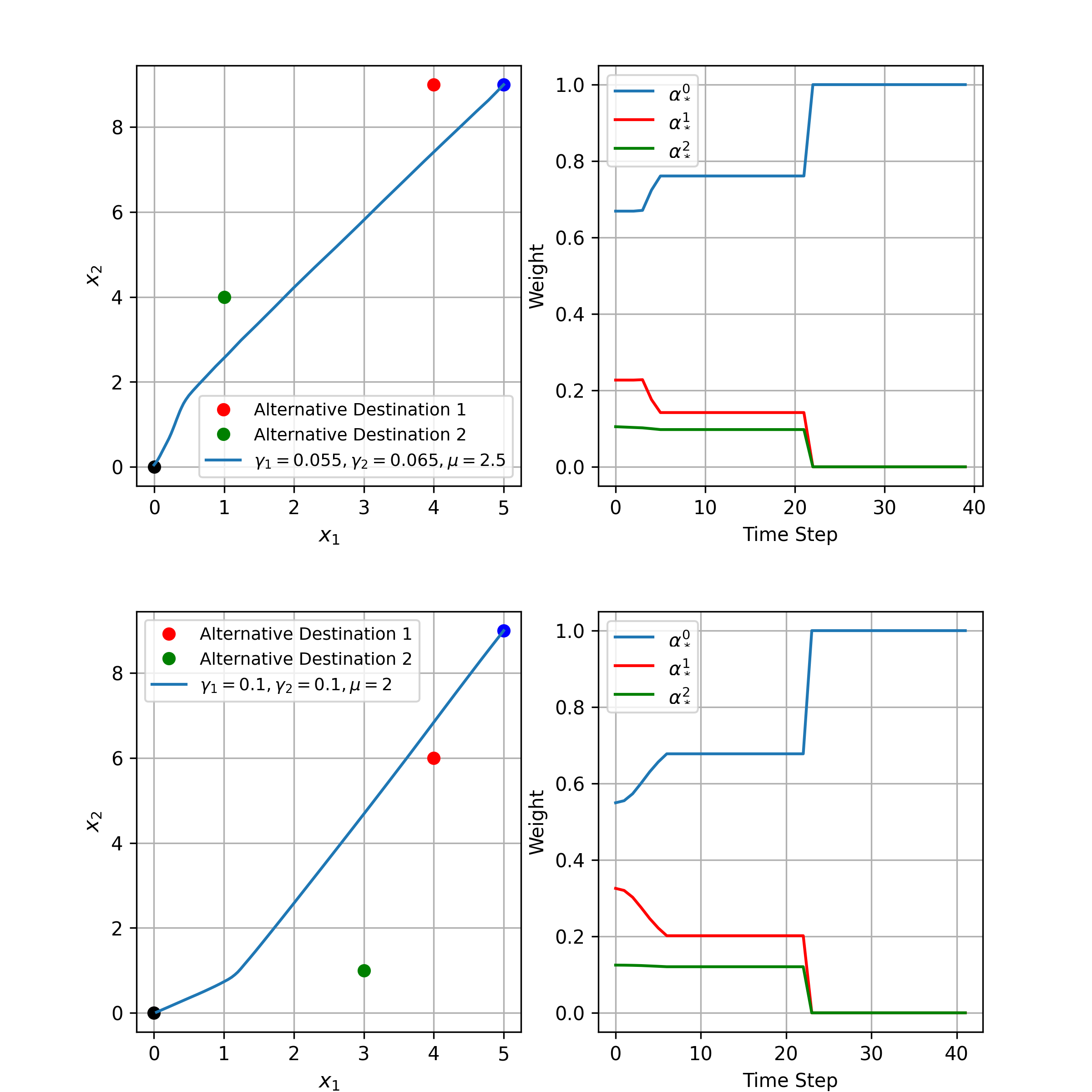}
\caption{UAV simulation results for system \eqref{eq:model2} with two sets of alternative destinations.}
\label{fig:sim2}
\end{figure}

To evaluate the performance of the proposed backup plan constrained MPC algorithm in terms of increasing the possibility of an emergency landing at an alternative destination, we include experiments using flight scenarios in which vehicles must navigate to the nearest destination at a random time step during the operation en route to the primary destination. The purpose of this event is to simulate an emergency landing due to unexpected failure. The random failure time was uniformly sampled from the finite set, i.e., $\{1, 2, 3, \dots, 20\}$. For comparison, we also used a baseline method that only aims to arrive at the primary destination before failure. In contrast, our proposed framework considers both the primary and the alternative destinations before failure. 
The distance to the closest destination at the time of randomly sampled failure averaged over 50 sample flights, is presented in Table~\ref{tab:random_test3} and Table~\ref{tab:random_test4}. As shown in the third column (labeled as \emph{Distance at fail.}) of the tables, the proposed framework tends to have a relatively shorter distance to the closest destination at the onset of the failure than the baseline. In addition to the distance to the closest destination at the time of failure, we also reported the energy, i.e., the sum of squares of the control input, used to land on the closest destination (labeled as \emph{Energy used after fail.}) in the tables. The results show that the proposed method uses much less energy to land at the closest destination after failure compared with the baseline, saving 32\% of energy for the first simulation setup and 24\% for the second simulation setup. Furthermore, the proposed control framework tends to use less energy in the entire flight, as shown in the column labeled as \emph{Total energy used} of the tables. Finally, we introduced a metric relevant to safety margin, i.e., $\text{Margin} = \text{Average remaining energy at failure}/\text{Average energy used after failure}$. If this metric is less than one, it means that the vehicle could not reach the closest destination. As shown in the last column of the tables, with the total energy set as 8, the proposed algorithm has significantly greater margins than the baseline, indicating a larger possibility for an emergency landing.
\begin{table}[t]
    \centering
    \caption{Random failure test results for first simulation setup of system \eqref{eq:model2}}
    \begin{tabular}{lccccc}
    \hline
    \hline
     Method &Time of failure& Distance at fail.& Energy used after fail. & Total energy used & Margin at fail.\\
     & (avg$\pm$stdev)&(avg$\pm$stdev)  &(avg$\pm$stdev)          & (avg$\pm$stdev)   &    (total energy = 8)      \\
    \hline
    Proposed&  $10.10\pm5.1$  &$1.42\pm0.7$ &$1.15\pm0.6$ &$1.95\pm0.8$ &8.2\\ 
    Baseline&  $8.49\pm5.8$  &$1.50\pm0.7$ &$1.69\pm1.0$ &$3.82\pm1.7$ &5.8\\
    \hline
    \hline
    \end{tabular}
    \label{tab:random_test3}
\end{table}

\begin{table}[t]
    \centering
    \caption{Random failure test results for second simulation setup of system \eqref{eq:model2}}
    \begin{tabular}{lccccc}
    \hline
    \hline
     Method &Time of failure& Distance at fail.& Energy used after fail. & Total energy used & Margin at fail.\\
     & (avg$\pm$stdev)&(avg$\pm$stdev)  &(avg$\pm$stdev)          & (avg$\pm$stdev)   &      (total energy = 8)      \\
    \hline
    Proposed&  $11.75\pm5.3$  &$1.75\pm0.9$ &$1.25\pm0.5$ &$2.21\pm0.8$ &7.0\\ 
    Baseline&  $10.49\pm5.6$  &$1.80\pm0.8$ &$1.65\pm0.8$ &$4.18\pm1.0$ &4.8\\
    \hline
    \hline
    \end{tabular}

    \label{tab:random_test4}
\end{table}
Based on the results above, it can be concluded that the proposed backup plan-constrained MPC algorithm can significantly increase safety in terms of backup plan safety for this UAV path planning problem, increasing the chance to safely land at an alternative destination during the operation toward the primary destination. While the performance of the proposed algorithm is excellent when applied to system \eqref{eq:model2}, the generated trajectory in Fig. \ref{fig:sim2} does not have an optimal visual effect for illustration purpose. To help the readers better understand the impact and benefit of the proposed algorithm, we evaluate the performance of the proposed algorithm on a second dynamical model: 
\begin{align}
\label{eq:model1}
    &\mathbf{x}_{k+1}=
    \left[
    \begin{array}{cccc}
     1 & 0 \\
     0 & 1 \\
    \end{array}
    \right] \mathbf{x}_k +
    \left[
    \begin{array}{cccc}
     1 & 0    \\
     0 & 1
    \end{array}
    \right]\mathbf{u}_{k} ,
\end{align}
where $\mathbf{x}_k \in \RR^2$ represents the horizontal coordinate and the vertical coordinate, and the control input $\mathbf{u}_{k} \in \RR^2$ comprises the horizontal and vertical speeds. 
For system \eqref{eq:model1}, we consider the state constraint set as $\mathbf{x}_k = [x_1,x_2]^\top$, where $x_1\in [-2,10]$ and $x_2\in[-2,10]$, and the input constraint set as $\mathbf{u}_{k}=[u_1,u_2]^\top$, where $u_1\in [-10,2]$ and $u_2 \in[-10,2]$.
The initial condition is $\mathbf{x}_0 = [5,9]^\top$ and the primary destination is $\mathbf{p}^0 = [0,0]^\top$. We used the same distance metric
$d$, cost functions, and MPPI control parameters in this setup, and the control horizon is set as $N$ = 5. The matrix $K$ is chosen such that the pole of the system $A+BK$ is placed at $[0.9;0.9]$, and also \eqref{eq:assumption} is satisfied with $\delta=3$. 
Two simulations have been conducted with two sets of alternative destinations ($m=2$). For the first simulation, alternative destinations are given by $\mathbf{p}^1=[3,9]^\top$ and $\mathbf{p}^2 = [1,5]^\top$ with calculated $P=0.0099$. In order to satisfy \eqref{eq:al 1}, we need $\beta > 0.0061$. For the second simulation, alternative destinations are $\mathbf{p}^1=[4,6]^\top$ and $\mathbf{p}^2 = [3,1]^\top$ with calculated $P=0.0044$. In order to satisfy \eqref{eq:al 1}, we need $\beta > 0.0027$. The simulation results are presented in Fig. \ref{fig:sim1}.
\begin{figure}[!b]
 \centering
\includegraphics[width=0.8\linewidth]{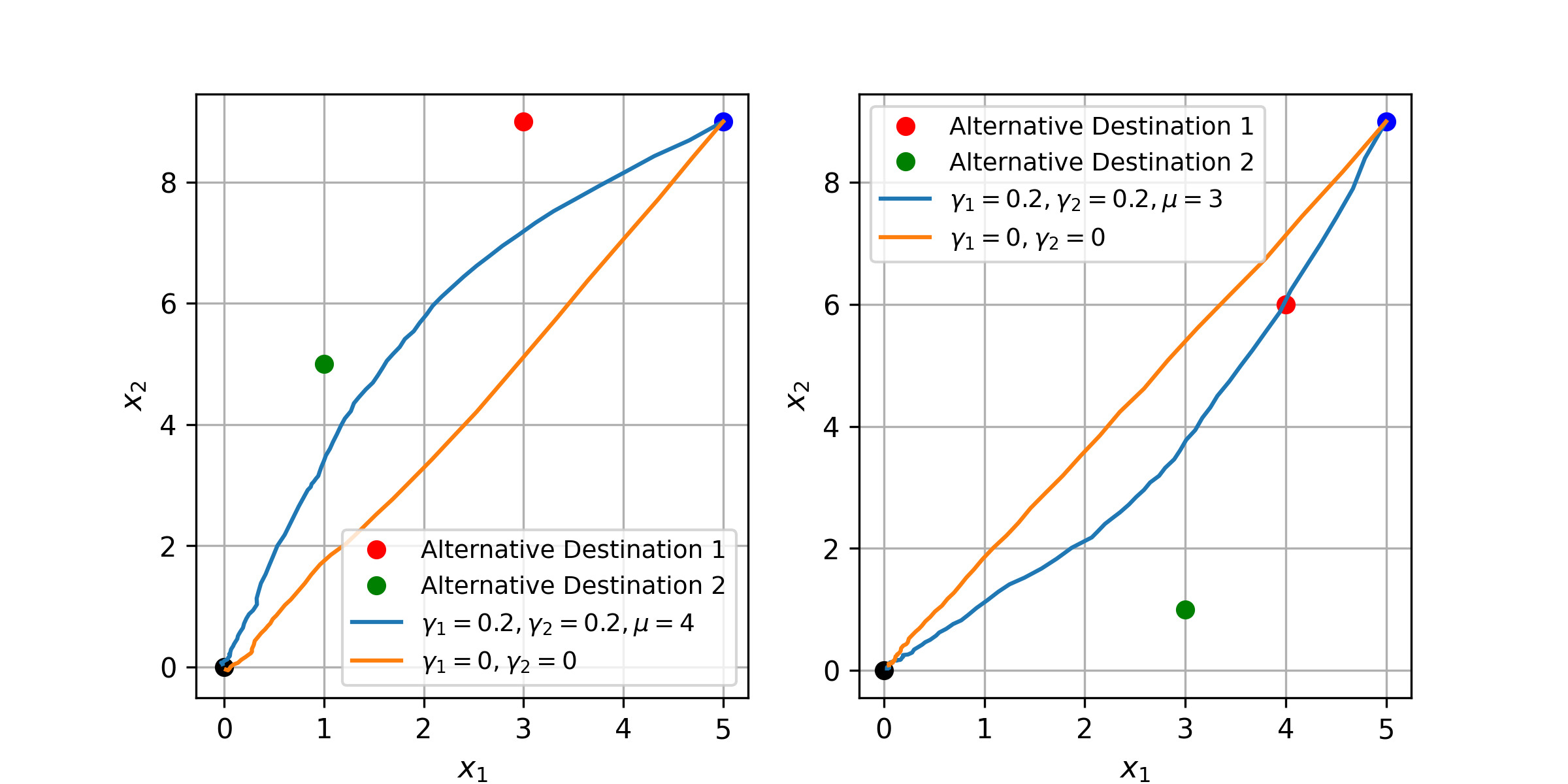}
\caption{UAV simulation results for system \eqref{eq:model1} with two sets of alternative destinations.}
\label{fig:sim1}
\end{figure}
The left figure presents the simulation results with $\mathbf{p}^1=[3,9]^\top$ and $\mathbf{p}^2 = [1,5]^\top$. The right one shows the simulation results with $\mathbf{p}^1=[4,6]^\top$ and $\mathbf{p}^2 = [3,1]^\top$. The blue line is the executed state trajectory of the UAV following the backup plan constrained MPC algorithm, and the orange line is the baseline trajectory using the regular MPPI in which we consider the primary objective only.

The UAV with regular MPPI control ($\gamma = 0$ with primary destination consideration only) flies to the primary destination almost directly, as expected. The UAV with the backup plan constrained MPC algorithm makes a significant detour to the primary destination in both cases, flying near alternative destinations. 
Regarding the simple system \eqref{eq:model1}, as we have control input towards both elements of the state, we do not have a strict constraint for $\beta$, a lower bound of $\alpha_*^0$, and thus the weights we assigned for the alternative destinations can be large. As a result, the detour trajectory is much closer to the alternative destinations. However, for the double integrator system \eqref{eq:model2}, in order to satisfy the assumptions of Theorem \ref{theorem3}, $\beta$ needs to be much larger and thus the weights assigned for alternative destinations are smaller. Accordingly, $\alpha_*^0(\mathbf{x}_k)$ is always large, and the trajectory detours are not considerably closer to the alternative destinations. In terms of visually illustrating the benefits and impact of the proposed backup plan-constrained MPC algorithm, Fig. \ref{fig:sim1} from system \eqref{eq:model1} emerges as a more effective choice when compared to Fig. \ref{fig:sim2} from system \eqref{eq:model2}. 
While system \eqref{eq:model1} features velocities as the control inputs, it can be utilized for the high-level control loop in the cascaded (position) control of the UAV. The control output of this loop, specifically velocity, will be transmitted to the low-level controller to generate the corresponding forces or accelerations necessary for tracking this velocity. Moreover, additional constraints can be introduced to the feasibility maximization \eqref{eq:mmpc} to ensure that the complex system can effectively track the optimized control input velocity obtained from the simple system. The single integration system models are used by researchers on path planning of UAVs, and the cascaded control scheme, which involves mapping desired velocities to accelerations or forces, e.g., individual rotor thrusts for quadrotors, has already been thoroughly studied and validated in previous works \cite{kaufmann2022benchmark,giusti2015machine,loquercio2018dronet}.

We also tested the proposed algorithm on system \eqref{eq:model1} with experiments including emergency landings due to unexpected failure, and the results are shown in Table. \ref{tab:random_test1} and Table. \ref{tab:random_test2}. It can be concluded that the proposed algorithm achieves better performance compared with the baseline method in terms of a smaller distance to the closest destination at the time of failure, a smaller energy used for an emergency landing, and a larger margin at failure. Thus, the proposed backup plan constrained MPC algorithm achieves good performance on both systems.

\begin{table}[!h]
\caption{Random failure test results for first simulation of system \eqref{eq:model1}}
    \centering
    \begin{tabular}{lccccc}
    \hline
    \hline
     Method &Time of failure& Distance at fail.& Energy used after fail. & Total energy used & Margin at fail.\\
     & (avg$\pm$stdev)&(avg$\pm$stdev)  &(avg$\pm$stdev)          & (avg$\pm$stdev)   &    (total energy = 5)      \\   
    \hline
    Proposed&  $5.0\pm 2.5$ & $1.6\pm0.3$ & $0.221\pm0.118$ & $0.885\pm0.276$ &19.6\\ 
    Baseline&  $5.3\pm2.5$  & $2.0\pm0.3$ & $0.314\pm0.130$ & $3.832\pm0.945$ &3.7\\
    \hline
    \hline
    \end{tabular}
    
    \label{tab:random_test1}
\end{table}

\begin{table}[!h]
\caption{Random failure test results for second simulation of system \eqref{eq:model1}}
    \centering
    \begin{tabular}{lccccc}
    \hline
    \hline
     Method&Time of failure& Distance at fail.& Energy used after fail. & Total energy used & Margin at fail.\\
     & (avg$\pm$stdev)&(avg$\pm$stdev)  &(avg$\pm$stdev)          & (avg$\pm$stdev)   &   (total energy = 5)       \\
    \hline
    Proposed&  $4.4\pm2.5$  &$1.4\pm0.8$ &$0.241\pm0.143$ &$1.307\pm0.535$ &16.2\\ 
    Baseline&  $5.4\pm2.5$  &$1.9\pm0.6$ &$0.272\pm0.098$ &$3.834\pm1.032$ &3.7\\
    \hline
    \hline
    \end{tabular}
    \label{tab:random_test2}
\end{table}

\subsection{Relationship between the Tuning Parameters and Computational Time and Cost}
\begin{figure}[!t]
 \centering
  \includegraphics[width=1\linewidth]{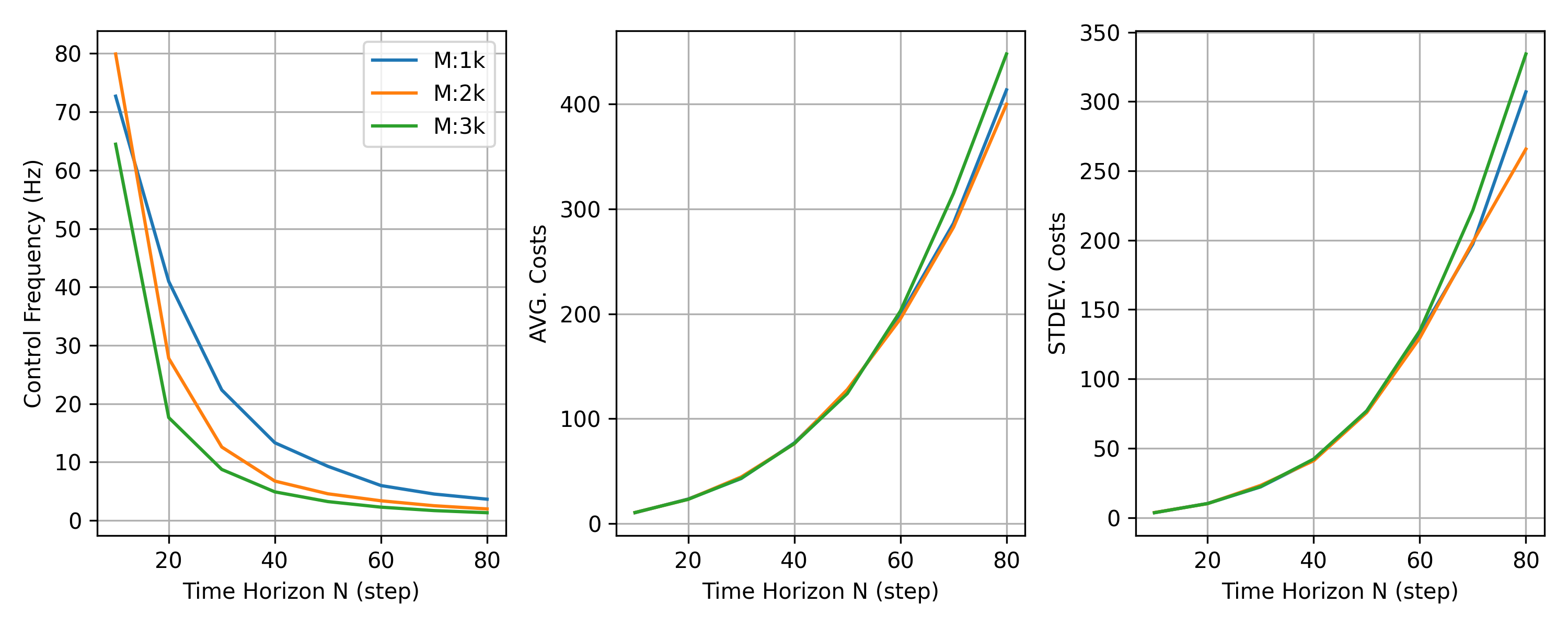}
  \caption{Control frequency and cost over different sample size and prediction horizon.}
\label{fig:cost}
\end{figure}
In this section, we investigate the impact of varying the time horizon $N$ or sample size $M$ on the computational time required to solve the feasibility maximization problem \eqref{eq:mmpc} using the sampling-based 3M solver. Solutions from the sampling-based algorithm become closer to the optimal when the number of samples increases. However, increasing the number of samples trades off computational complexity. Additionally, we examine the effect of different time horizons and sample sizes on both the average and standard deviation of the cost samples $\boldsymbol{\alpha}_*(\mathbf{x})\mathbf{J}(\mathbf{x},\mathbf{U}_*(\mathbf{x}))$. The computational time is inversely correlated with the control frequency of the autonomous vehicle, while the average cost serves as a performance indicator for the proposed algorithm. The experiments have the same setup as the first simulation experiment of system \eqref{eq:model2} discussed in Section \ref{sec:simulation}, and the results are shown in Fig.~\ref{fig:cost}. The figure shows a preliminary result on the relationship between the computational complexity and cost (optimality) with respect to different decision variables in the sampling-based algorithm. 

The first sub-figure in Fig.~\ref{fig:cost} presents the control frequency in relation to the prediction horizon $N$.  Our simulation was executed on a desktop computer equipped with an AMD Ryzen 5 3600 CPU, 16GB RAM, and NVIDIA GeForce RTX 2070 GPU, utilizing the proposed 3M solver. As the prediction horizon $N$ increases, so does the computational complexity of the MPPI part in the order of $N^2$. With a sample size of 1000 and a time horizon of 50, we can run the algorithm every 0.1 seconds which can be considered suitable for real-time implementation.
The second and third sub-figures in Fig.~\ref{fig:cost} depict the impact of time horizon $N$ on the average cost and the standard deviation of the cost samples $\boldsymbol{\alpha}_*(\mathbf{x})\mathbf{J}(\mathbf{x},\mathbf{U}_*(\mathbf{x}))$ with different sample sizes. Increasing $N$ is expected to increase the cost since running costs accumulate with an extended time horizon as defined in \eqref{eq:costSmallJ}, which is aligned with the results shown in Fig. \ref{fig:cost}. Also, the increasing variance in N can be explained by
an accumulation of variance due to the addition of random variables. Therefore, given a desired control frequency, we recommend testing the proposed algorithm with a relatively small sample size and a small time horizon for the 3M solver initially. Then, we can keep increasing the time horizon as long as the desired control frequency can be maintained. If the simulation results are still not good enough, increasing the sample size while adjusting the time horizon to satisfy the control frequency requirement is recommended.

\section{Conclusion}
This paper presents a new safety concept, called backup plan safety, to enhance the path planning of autonomous systems. This concept was inspired by the need to consider alternative missions when the primary mission is deemed infeasible. To ensure safety against mission uncertainties, we formulated a feasibility maximization problem based on multi-objective MPC and incorporated a weight vector to balance the multiple costs and multi-horizon inputs to evaluate the costs of the alternative missions. The proposed backup plan constrained MPC algorithm designs the weight vector that ensures the asymptotic stability of the closed-loop system by making the cost reduction of the primary mission dominate the cost increases of the alternative destinations, thereby achieving a non-increasing overall cost. To address the feasibility maximization problem with multi-horizon inputs efficiently, the backup plan constrained MPC algorithm leverages the proposed multi-horizon multi-objective model predictive path integral control (3M) solver, which uses a sampling-based method to enable efficient parallel computing. Preliminary numerical simulations of UAV path planning illustrate this new concept and the effectiveness of the proposed algorithm. 

\section{Acknowledge}
This work was supported in part by AFOSR under Grant FA9550-21-1-0411, NASA under Grants 80NSSC20M0229 and 80NSSC22M0070, NSF under Grants CNS-1932529 and IIS-2133656.

\bibliography{main}

\end{document}